\newtheorem{theorem}{Theorem}
\newtheorem{definition}{Definition}
\newtheorem{assumption}{Assumption}
\newtheorem{remarkenv}{Remark}
\newenvironment{remark}{\begin{remarkenv}}{\hfill\raisebox{0.5mm}[0cm][0cm]{$\lhd$}\end{remarkenv}}
\newcommand{\sortbib}[1]{}
\newcommand{\R}{\ensuremath{\mathds{R}}} 
\newcommand{\ones}{\ensuremath{\bm{1}}} 
\newcommand{\dd}{\ensuremath{\mathrm{d}}}                         
\newcommand{\defl}{\ensuremath{\mathrel{\mathop:}=}}              
\newcommand{\defr}{\ensuremath{=\mathrel{\mathop:}}}              
\newcommand{\pd}[2]{\ensuremath{\frac{\partial #1}{\partial #2}}} 
\newcommand{\T}{\ensuremath{\mathrm{T}}}                          
\newcommand{\classK}{\ensuremath{\mathcal{K}}\xspace}             
\newcommand{\Acal}{\ensuremath{\mathcal{A}}}
\newcommand{\Ical}{\ensuremath{\mathcal{I}}}
\newcommand{\Jcal}{\ensuremath{\mathcal{J}}}
\newcommand{\Xcal}{\ensuremath{\mathcal{X}}}
\newcommand{\qed}{\hfill\blacksquare}
\newtheorem{proposition}{Proposition}
\newtheorem{Corollary}{Corollary}
\newtheorem{Lemma}{Lemma}
\newtheorem{Example}{Example}
\title{Stability Analysis of Monotone Systems via \\ Max-separable {L}yapunov Functions}
\author{H.R.~Feyzmahdavian, B.~Besselink, M.~Johansson%
\thanks{The authors are with the ACCESS Linnaeus Centre and the Department of Automatic Control, School of Electrical Engineering, KTH Royal Institute of Technology, Stockholm, Sweden. Email: \textit{hamidrez@kth.se, bart.besselink@ee.kth.se, mikaelj@kth.se}.}
\thanks{A preliminary version of this work is submitted as the conference paper~\cite{Bart:16}. This manuscript significantly extends the work~\cite{Bart:16} by providing additional technical results and illustrative examples. Namely, \cite{Bart:16} only shows the equivalence of statements $2)$ and $3)$ in Theorem~\ref{thm_omegapath} (rather than the full Theorem~\ref{thm_omegapath}), presents a notion of D-stability that is less general than the notion in the current manuscript, and discusses delay-independent stability of monotone systems with \textit{constant} delays (rather than time-varying and potentially unbounded delays).}
}
\begin{document}
\maketitle
\thispagestyle{empty}
\pagestyle{empty}

%
%

\begin{abstract}
We analyze stability properties of monotone nonlinear systems via max-separable Lyapunov functions, motivated by the following observations: first, recent results have shown that asymptotic stability of a monotone nonlinear system implies the existence of a max-separable Lyapunov function on a compact set; second, for monotone linear systems, asymptotic stability implies the stronger properties of D-stability and insensitivity to time-delays. This paper establishes that for monotone nonlinear systems, equivalence holds between asymptotic stability, the existence of a max-separable Lyapunov function, D-stability, and insensitivity to bounded and unbounded time-varying delays. In particular, a new and general notion of D-stability for monotone nonlinear systems is discussed and a set of necessary and sufficient conditions for delay-independent stability are derived. Examples show how the results extend the  state-of-the-art.
\end{abstract}

%
%

\section{Introduction}
\label{sec:Introduction}

Monotone systems are dynamical systems whose trajectories preserve a partial order relationship on their initial states. Such systems appear naturally in, for example, chemical reaction networks~\cite{Leenheer:07}, consensus dynamics~\cite{moreau_2004}, systems biology~\cite{sontag_2005}, wireless networks~\cite{Yat:95,Feyzmahdavian:12,Boche:08,Feyzmahdavian:14-3}, and as comparison systems in stability analysis of large-scale interconnected systems~\cite{Angeli:07,ruffer_2010,ruffer_2010b}. Due to their wide applicability, monotone systems have attracted considerable attention from the control community  (see, \textit{e.g.},~\cite{Leenheer:01-1,Angeli:03,Aswani:09,Randzer-CDC:14}). Early references on the theory of monotone systems include the papers \cite{Hirsch:82,hirsch_1985,Hirsch:88} by Hirsch and the excellent monograph \cite{book_smith_1995} by Smith.

For monotone \textit{linear} systems (also called positive linear systems), it is known that asymptotic stability of the origin implies further stability properties. First,  asymptotically stable monotone linear systems always admit a Lyapunov function that can be expressed as a weighted max-norm~\cite{book_farina_2000}. Such Lyapunov functions can be written as a maximum of functions with one-dimensional arguments, so they are a particular class of max-separable Lyapunov functions \cite{Ito:12,Ito:14,rantzer_2015}. Second, asymptotic stability of monotone linear systems is robust with respect to scaling of the dynamics with a diagonal matrix, leading to the so-called D-stability property. This notion appeared first in \cite{enthoven_1956} and \cite{arrow_1958}, with additional early results given in~\cite{johnson_1974}. Third, monotone linear systems possess strong robustness properties with respect to time-delays \cite{Haddad:04,Ngoc:06,Buslowicz:08,Kaczorek:09,Liu:09,Liu:10,Liu:2011,Feyzmahdavian:13,briat_2013,Feyzmahdavian:13-0}. Namely, for these systems, asymptotic stability of a time-delay system can be concluded from stability of the corresponding delay-free system, simplifying the analysis.

For monotone \emph{nonlinear} systems, it is in general unknown whether asymptotic stability of the origin implies notions of D-stability or delay-independent stability, even though results exist for certain classes of monotone systems, such as homogeneous and sub-homogeneous systems~\cite{Vahid:11,mason_2009,bokharaie_2010,feyzmahdavian_2014,Feyzmahdavian14:MTNS,Feyzmahdavian:14}. Recent results in \cite{rantzer_2013} and \cite{dirr_2015} show that for monotone nonlinear systems, asymptotic stability of the origin implies the existence of a max-separable Lyapunov function on every compact set in the domain of attraction. Motivated by this result and the strong robustness properties of monotone \emph{linear} systems, we study stability properties of monotone \emph{nonlinear} systems using max-separable Lyapunov functions. Here, monotone nonlinear systems are neither restricted to be homogeneous nor sub-homogeneous.

The main contribution of this paper is to extend the stability properties of monotone linear systems discussed above to monotone nonlinear systems. Specifically, we  demonstrate that asymptotic stability of the origin for a monotone nonlinear system leads to D-stability, and asymptotic stability in the presence of bounded and unbounded time-varying delays. Furthermore, this paper has the following four contributions:

First, we show that for monotone nonlinear systems, the existence of a max-separable Lyapunov function on a compact set is equivalent to the existence of a path in this compact set such that, on this path, the vector field defining the system is negative in all its components. This allows for a simple evaluation of asymptotic stability for such systems.

Second, we define a novel and natural notion of D-stability for monotone nonlinear systems that extends earlier concepts in the literature~\cite{Vahid:11,mason_2009,bokharaie_2010}. Our notion of D-stability is based on the composition of the components of the vector field with an arbitrary monotonically increasing function that plays the role of a scaling. We then show that for monotone nonlinear systems, this notion of D-stability is equivalent to the existence of a max-separable Lyapunov function on a compact set.

Third, we demonstrate that for monotone nonlinear systems, asymptotic stability of the origin is  insensitive to a general class of time-delays which includes bounded and unbounded time-varying delays. Again, this provides an extension of existing results for monotone linear systems to the nonlinear case. In order to impose minimal restrictions on time-delays, our proof technique uses the max-separable Lyapunov function that guarantees asymptotic stability of the origin without delays as a Lyapunov-Razumikhin function.

Fourth, we derive a set of necessary and sufficient conditions for establishing delay-independent stability of monotone nonlinear systems. These conditions can also provide an estimate of the region of attraction for the origin. As in the case of D-stability, we extend several existing results on analysis of monotone systems with time-delays, which often rely on homogeneity and sub-homogeneity of the vector field (see, \textit{e.g.},~\cite{mason_2009,bokharaie_2010,feyzmahdavian_2014,Feyzmahdavian14:MTNS,Feyzmahdavian:14}), to general monotone systems.

The remainder of the paper is organized as follows. Section~\ref{sec_problem statement} reviews some preliminaries on monotone nonlinear systems and max-separable Lyapunov functions, and discusses stability properties of monotone linear systems. In Section~\ref{sec_monotonestab}, our main results for stability properties of monotone nonlinear systems are presented, whereas in Section~\ref{sec_delaysystems}, delay-independent stability conditions for monotone systems with time-varying delays are derived. Section~\ref{sec_discussion} demonstrates through a number of examples how these results extend earlier work in the literature. Finally, conclusions are stated in Section~\ref{sec_conclusions}.

\textit{Notation}.
The set of real numbers is denoted by $\R$, whereas $\R_+ = [0,\infty)$ represents the set of nonnegative real numbers. We let $\R_+^n$ denote the positive orthant in $\R^n$. The associated partial order is given as follows. For vectors $x$ and $y$ in $\R^n$, $x<y$ ($x\leq y$) if and only if $x_i < y_i$ ($x_i\leq y_i$) for all $i\in\Ical_n$, where $x_i\in\R$ represents the $i^{\textup{th}}$ component of~$x$ and $\Ical_n = \{1,\ldots,n\}$. For a real interval~$[a,b]$, $\mathcal{C}\bigl([a,b],\R^{n}\bigr)$ denotes the space of all real-valued continuous functions on $[a,b]$ taking values in $\R^{n}$. A continuous function $\omega:\R_+\rightarrow\R_+$ is said to be of class \classK if  $\omega(0)=0$ and $\omega$ is strictly increasing. A function $\omega:\R^n\rightarrow\R_+$ is called positive definite if $\omega(0)=0$ and $\omega(x)>0$ for all $x \neq 0$. Finally, $\ones_n\in\R^n$ denotes the vector whose components are all one.

%
%

\section{Problem Statement and Preliminaries}
\label{sec_problem statement}

Consider dynamical systems on the positive orthant $\R_+^n$ described by the ordinary differential equation
\begin{align}
\dot{x} = f(x).
\label{eqn_sys}
\end{align}
Here, $x$ is the system state, and the vector field $f:\R_+^n\rightarrow\R^n$ is locally Lipschitz so that local existence and uniqueness of solutions is guaranteed~\cite{book_khalil_2002}. Let $x(t,x_0)$ denote the solution to~\eqref{eqn_sys} starting from the initial condition $x_0\in\R_+^n$ at the time $t\in\R_+$. We further assume that~\eqref{eqn_sys} has an equilibrium point at the origin, \textit{i.e.}, $f(0) = 0$.

\subsection{Preliminaries on Monotone Systems}
In this paper, \textit{monotone systems} will be studied according to the following definition.

\begin{definition}
\label{def_monotonesystem}
The system (\ref{eqn_sys}) is called monotone if the implication
\begin{align}
x'_0 \leq x_0 \;\;\Rightarrow\;\; x(t,x'_0) \leq x(t,x_0), \quad \forall t\in\R_+,
\label{eqn_def_monotonesystem}
\end{align}
holds, for any initial conditions $x_0,x'_0\in \R^n_+$.
\end{definition}

The definition states that trajectories of monotone systems starting at ordered initial conditions preserve the same ordering
during the time evolution. By choosing $x'_0 = 0$ in~(\ref{eqn_def_monotonesystem}), since $x(t,0)=0$ for all $t\in\R_+$, it is easy to see that
\begin{align}
x_0 \in\R_+^n \;\;\Rightarrow\;\; x(t,x_0)\in\R_+^n, \quad \forall t\in\R_+.
\end{align}
This shows that the positive orthant $\R_+^n$ is an invariant set for the monotone system (\ref{eqn_sys}). Thus, monotone systems with an equilibrium point at the origin define positive systems\footnote{A dynamical system given by (\ref{eqn_sys}) is called \textit{positive} if any trajectory of~(\ref{eqn_sys}) starting from nonnegative initial conditions remains forever in the positive orthant, \textit{i.e.}, $x(t)\in\R^n_+$ for all $t\in\R_+$ when $x_0\in\R^n_+$.}.

Monotonicity of dynamical systems is equivalently characterized by the so-called \textit{Kamke condition}, stated next.

\begin{proposition}[\hspace{-0.2mm}\cite{book_smith_1995}]
\label{thm_kamkecondition}
\textit{
The system (\ref{eqn_sys}) is monotone if and only if the following implication holds for all $x,x'\in\R^n_+$ and all $i\in\Ical_n$}:
\begin{align}
x' \leq x\; \textup{and}\; x'_i = x_i \;\;\Rightarrow\;\; f_i(x')\leq f_i(x).
\label{eqn_thm_kamkecondition}
\end{align}
\vskip2mm%
\end{proposition}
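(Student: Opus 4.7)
The plan is to prove the two directions of the equivalence separately; the sufficiency direction (Kamke's theorem) is the substantive one and requires the familiar perturbation-and-limit argument.

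For necessity, I would assume (\ref{eqn_sys}) is monotone and fix $x, x' \in \R_+^n$ and $i \in \Ical_n$ with $x' \leq x$ and $x'_i = x_i$. Consider the scalar function $\phi(t) = x_i(t, x) - x_i(t, x')$. By Definition~\ref{def_monotonesystem}, $\phi(t) \geq 0$ for all $t \in \R_+$, while $\phi(0) = x_i - x'_i = 0$. Hence the right derivative satisfies $\phi'(0^+) = f_i(x) - f_i(x') \geq 0$, which is precisely (\ref{eqn_thm_kamkecondition}).

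For sufficiency, I would assume (\ref{eqn_thm_kamkecondition}) and pick $x'_0 \leq x_0$ in $\R_+^n$; the goal is $x(t, x'_0) \leq x(t, x_0)$ on the common interval of existence. The idea is to construct a strictly ordered perturbed trajectory and let the perturbation vanish. For each $\epsilon > 0$, let $y^\epsilon$ denote the local solution of $\dot{y} = f(y) + \epsilon \ones_n$ with $y^\epsilon(0) = x_0 + \epsilon \ones_n$, and set $z(t) = x(t, x'_0)$. I would first claim $z(t) < y^\epsilon(t)$ componentwise on the common interval of existence. Since $z(0) < y^\epsilon(0)$, if some component ever became equal there would be an earliest time $t^* > 0$ with $z(t^*) \leq y^\epsilon(t^*)$ and $z_i(t^*) = y^\epsilon_i(t^*)$ for some $i \in \Ical_n$. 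The Kamke hypothesis applied at this pair yields $f_i(z(t^*)) \leq f_i(y^\epsilon(t^*))$, so
\begin{equation*}
\dot{y}^\epsilon_i(t^*) - \dot{z}_i(t^*) = f_i(y^\epsilon(t^*)) - f_i(z(t^*)) + \epsilon \geq \epsilon > 0.
\end{equation*}
But $y^\epsilon_i - z_i$ is a $C^1$ nonnegative function that attains the value $0$ at $t^*$, so its derivative there cannot be strictly positive, a contradiction. Passing $\epsilon \to 0$ and invoking continuous dependence of solutions on initial data and on the vector field, which is guaranteed by local Lipschitzness of $f$, then yields $z(t) \leq x(t, x_0)$ as required.

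The main obstacle is to make the first-crossing step rigorous, in particular to ensure that $y^\epsilon(t^*)$ lies in $\R_+^n$ (the domain on which $f$ and the Kamke hypothesis are defined) and that $y^\epsilon_i - z_i$ is genuinely $C^1$ near $t^*$. Both properties are secured by the choice of perturbation: adding $\epsilon \ones_n$ to both the drift and the initial condition pushes $y^\epsilon$ strictly above $z$ and into the interior of $\R_+^n$ near $t=0$, and continuity preserves these facts up to and including the first crossing. One must also handle the possibility that $y^\epsilon$ exists on a shorter interval than $z$; this is dealt with by working on the common interval of existence for each fixed $\epsilon$ and using a standard continuation argument after the limit $\epsilon \to 0$ is taken.
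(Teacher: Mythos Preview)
The paper does not prove this proposition at all: it is stated as a preliminary result with a citation to Smith's monograph \cite{book_smith_1995} and used without proof throughout. So there is no ``paper's own proof'' to compare against.

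That said, your argument is the standard one and is essentially correct. Two minor points worth tightening. First, in the contradiction step you write that a ``$C^1$ nonnegative function that attains the value $0$ at $t^*$'' cannot have strictly positive derivative there; as phrased this is false (take $g(t)=t$ at $t^*=0$). What you actually use is that $t^*$ is the \emph{first} crossing time, so $y^\epsilon_i - z_i > 0$ on $[0,t^*)$ and vanishes at $t^*$, which forces the left derivative, and hence by $C^1$ the derivative, to be $\leq 0$. Second, the issue you flag about $y^\epsilon(t^*)$ and $z(t^*)$ lying in $\R_+^n$ is genuine in this setting, since $f$ and the Kamke hypothesis are only given on the closed orthant; the cleanest fix is the one Smith uses, namely to first establish forward invariance of $\R_+^n$ directly from the Kamke condition (take $x'=0$, $x$ on a face), and then run your perturbation argument knowing both trajectories remain nonnegative. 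With those two clarifications your proof matches the textbook argument.
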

Note that if $f$ is continuously differentiable on $\R^n_+$, then condition~\eqref{eqn_thm_kamkecondition} is equivalent to the requirement that $f$ has a Jacobian matrix with nonnegative off-diagonal elements, \textit{i.e.},
\begin{align}
\frac{\partial f_i}{\partial x_j}(x)\geq 0,\quad x \in \R^n_+,
\label{cooperative}
\end{align}
holds for all $i \neq j$, $i,j\in\Ical_n$~\cite[Remark 3.1.1]{book_smith_1995}. A vector field satisfying~\eqref{cooperative} is called \textit{cooperative}.

In this paper, we will consider stability properties of  monotone nonlinear systems. To this end, we use the following definition of (asymptotic) stability, tailored for monotone systems on the positive orthant.

\begin{definition}
\label{def_stability}
The equilibrium point $x=0$ of the monotone system (\ref{eqn_sys}) is said to be stable if, for each $\varepsilon>0$, there exists a $\delta>0$ such that
\begin{align}
0\leq x_0 < \delta\ones_n \;\;\Rightarrow\;\; 0\leq x(t,x_0) < \varepsilon\ones_n, \quad \forall t\in\R_+.
\end{align}
The origin is called asymptotically stable if it is stable and, in addition, $\delta$ can be chosen such that
\begin{align*}
0\leq x_0 < \delta\ones_n \;\;\Rightarrow\;\;\lim_{t\rightarrow\infty}x(t,x_0) = 0.
\end{align*}
\vskip2.5mm
\end{definition}

Note that, due to the equivalence of norms on $\R^n$ and forward invariance of the positive orthant, Definition~\ref{def_stability} is equivalent to the usual notion of Lyapunov stability \cite{book_khalil_2002}.

\subsection{Preliminaries on Max-separable Lyapunov Functions}
We will characterize and study asymptotic stability of monotone nonlinear systems by means of so-called \textit{max-separable Lyapunov functions}
\begin{align}
V(x) = \max_{i\in\Ical_n} \;V_i(x_i),
\label{eqn_maxsepV}
\end{align}
with scalar functions $V_i:\R_+\rightarrow\R_+$. Since the Lyapunov function~(\ref{eqn_maxsepV}) is not necessarily continuously differentiable, we consider its upper-right Dini derivative along solutions of (\ref{eqn_sys}) (see, \textit{e.g.}, \cite{book_rouche_1977}) as
\begin{align}
D^+V(x) = \limsup_{h\rightarrow0^+}\frac{V\bigl(x+hf(x)\bigr) - V\bigl(x\bigr)}{h}.
\label{eqn_Diniderivative}
\end{align}
The following result shows that if the functions $V_i$ in (\ref{eqn_maxsepV}) are continuously differentiable, then~(\ref{eqn_Diniderivative}) admits an explicit expression.

\begin{proposition}[\hspace{-0.2mm}\cite{danskin_1966}]
\label{lem_DVmaxsep}
\textit{
Consider $V:\R_+^n\rightarrow\R_+$ in (\ref{eqn_maxsepV}) and let $V_i:\R_+\rightarrow\R_+$ be continuously differentiable for all $i\in\Ical_n$. Then, the upper-right Dini derivative (\ref{eqn_Diniderivative}) is given by}
\begin{align}
D^+V(x) = \max_{j\in\Jcal(x)} \frac{\partial V_j}{\partial x_j}(x_j)f_j(x),
\end{align}
\textit{where $\Jcal(x)$ is the set of indices for which the maximum in~(\ref{eqn_maxsepV}) is attained}, \textit{i.e.},
\begin{align}
\Jcal(x) = \bigl\{ j\in\Ical_n \;|\; V_j(x_j) = V(x) \bigr\}.
\label{eqn_lem_DVmaxsep_Jcal}
\end{align}
\end{proposition}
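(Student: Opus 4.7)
The plan is to unfold the Dini derivative directly from its definition. Writing $V(x+hf(x)) = \max_{i \in \Ical_n} V_i(x_i + h f_i(x))$ and using continuous differentiability of each $V_i$, I would apply the scalar Taylor expansion $V_i(x_i + h f_i(x)) = V_i(x_i) + h \frac{\partial V_i}{\partial x_i}(x_i) f_i(x) + o(h)$ as $h \to 0^+$. The objective is to reduce the difference quotient in~\eqref{eqn_Diniderivative} to an expression involving only the active indices $j \in \Jcal(x)$.

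The key step is the argmax reduction. First I would show that for any $i \notin \Jcal(x)$ one has $V_i(x_i) < V(x)$, so by continuity of all $V_k$ there exists $h_0 > 0$ such that $V_i(x_i + h f_i(x)) < \max_{j \in \Jcal(x)} V_j(x_j + h f_j(x))$ for every $h \in (0,h_0)$ and every $i \notin \Jcal(x)$. Consequently, on this interval, $V(x+hf(x)) = \max_{j \in \Jcal(x)} V_j(x_j + h f_j(x))$. Since $V_j(x_j) = V(x)$ for every $j \in \Jcal(x)$, substituting the Taylor expansion and subtracting $V(x)$ yields $V(x+hf(x)) - V(x) = \max_{j \in \Jcal(x)} \bigl[ h \tfrac{\partial V_j}{\partial x_j}(x_j) f_j(x) + o(h) \bigr]$.

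Dividing by $h > 0$ and taking the limsup as $h \to 0^+$, the $o(h)/h$ terms vanish uniformly in $j$ because $\Jcal(x)$ is finite, and the maximum exchanges with the limit, producing the claimed identity $D^+V(x) = \max_{j \in \Jcal(x)} \frac{\partial V_j}{\partial x_j}(x_j) f_j(x)$. The main obstacle is justifying this argmax reduction cleanly: one has to invoke finiteness of $\Ical_n$ together with the strict gap $V(x) - V_i(x_i) > 0$ for $i \notin \Jcal(x)$ in order to guarantee that inactive indices cannot recover and contribute to the outer maximum for small $h$. Once this is established, the remainder of the argument is a routine manipulation of Landau symbols over a finite index set.
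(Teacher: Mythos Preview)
The paper does not supply its own proof of this proposition; it is stated as a known result and attributed to Danskin~\cite{danskin_1966}. Your proposal is therefore not competing against an in-paper argument but rather providing a self-contained justification where the paper only gives a citation.

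Your argument is correct and is in fact the standard route to Danskin-type directional-derivative formulas for finite pointwise maxima. The two ingredients you identify---the strict gap $V(x)-V_i(x_i)>0$ for $i\notin\Jcal(x)$ together with finiteness of $\Ical_n$ to discard inactive indices for small $h$, and the uniform control of the Taylor remainders over the finite active set $\Jcal(x)$---are exactly what is needed, and your handling of both is sound. The only cosmetic point is that the limsup in \eqref{eqn_Diniderivative} is actually a genuine limit here, since each quotient $\bigl(V_j(x_j+hf_j(x))-V_j(x_j)\bigr)/h$ converges and the maximum over a finite set is continuous; you implicitly use this when you say the maximum exchanges with the limit, and it would do no harm to state it explicitly.
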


\subsection{Preliminaries on Monotone Linear Systems}
Let $f(x) = Ax$ with $A \in \R^{n\times n}$. Then, the nonlinear system (\ref{eqn_sys}) reduces to the linear system
\begin{align}
\dot{x} = Ax.
\label{eqn_syslin}
\end{align}
It is well known that (\ref{eqn_syslin}) is monotone in the sense of Definition~\ref{def_monotonesystem} (and, hence, positive) if and only if $A$ is Metzler, \textit{i.e.}, all off-diagonal elements of $A$ are nonnegative~\cite{book_farina_2000}. We summarize some important stability properties of monotone linear systems in the next result.

\begin{proposition}[\hspace{-0.2mm}\cite{book_berman_1994,rantzer_2015}]
\label{lem_linstab}
\textit{
For the linear system (\ref{eqn_syslin}), suppose that $A$ is Metzler. Then, the following statements are equivalent:
\begin{enumerate}
  \item[1)] The monotone linear system (\ref{eqn_syslin}) is asymptotically stable, \textit{i.e.}, $A$ is Hurwitz.
  \item[2)] There exists a max-separable Lyapunov function of the form
    \begin{align}
      V(x) = \max_{i\in\Ical_n} \:\frac{x_i}{v_i},
      \label{eqn_lem_linstab_maxsepV}
    \end{align}
 on $\R_+^n$, with $v_i>0$ for each $i\in\Ical_n$.
  \item[3)] There exists a vector $w>0$ such that $Aw < 0$.
  \item[4)] For any diagonal matrix $\Delta \in \R^{n\times n}$ with positive diagonal entries, the linear system
\begin{align*}
\dot{x} = \Delta Ax,
\end{align*}
is asymptotically stable, \textit{i.e.}, $\Delta A$ is Hurwitz,
\end{enumerate}
}
\vskip2.5mm
\end{proposition}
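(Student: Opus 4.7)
The plan is to close the four-way equivalence using the hub statement (3). Specifically, I would prove (3) $\Leftrightarrow$ (2), then (2) $\Rightarrow$ (1), then (1) $\Rightarrow$ (3), and finally (3) $\Leftrightarrow$ (4). The Lyapunov-theoretic core is the direct link between (2) and (3), which follows cleanly from Danskin's formula in Proposition~\ref{lem_DVmaxsep}.

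For (3) $\Rightarrow$ (2), given $w>0$ with $Aw<0$, I would take $v=w$ in~\eqref{eqn_lem_linstab_maxsepV}. By Proposition~\ref{lem_DVmaxsep},
\begin{align*}
D^+V(x) = \max_{j\in\Jcal(x)} \frac{1}{w_j}(Ax)_j.
\end{align*}
For each $j\in\Jcal(x)$ one has $x_j = V(x)w_j$ and $x_i\le V(x)w_i$ for all $i$; combining this with the Metzler property $A_{ji}\ge 0$ for $i\neq j$ yields the key bound
\begin{align*}
(Ax)_j = A_{jj}x_j + \sum_{i\neq j}A_{ji}x_i \le V(x)\,(Aw)_j.
\end{align*}
Since $(Aw)_j<0$ and $V(x)>0$ for $x\in\R_+^n\setminus\{0\}$, this forces $D^+V(x)<0$ away from the origin, establishing (2) and, via standard Lyapunov arguments on the invariant set $\R_+^n$, asymptotic stability on the positive orthant; linearity then extends this to all of $\R^n$, so (1) follows as a by-product. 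The converse (2) $\Rightarrow$ (3) is obtained by evaluating $D^+V$ at $x=v$, where $\Jcal(v)=\Ical_n$: the inequality $D^+V(v)<0$ immediately gives $(Av)_j<0$ for every $j$, so the choice $w=v$ proves (3).

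For (1) $\Rightarrow$ (3), I would exploit the positivity of the matrix exponential for Metzler matrices. Since $A$ is Metzler, $e^{tA}\ge 0$ for all $t\ge 0$, and since $A$ is Hurwitz the integral
\begin{align*}
w \defl \int_0^\infty e^{tA}\ones_n\di t
\end{align*}
converges. A direct computation shows $Aw = \bigl[e^{tA}\ones_n\bigr]_0^{\infty} = -\ones_n < 0$, while $w>0$ follows by observing that $e^{tA}\ones_n$ equals $\ones_n$ at $t=0$ and hence is strictly positive on some interval $[0,\tau]$. Finally, for (3) $\Leftrightarrow$ (4), note that for any positive diagonal $\Delta$ the matrix $\Delta A$ inherits the Metzler structure (row scaling by positive entries preserves signs) and satisfies $\Delta A w = \Delta(Aw) < 0$; thus, applying the already established (3) $\Rightarrow$ (1) to $\Delta A$ gives (4), and setting $\Delta = I$ in (4) recovers (1) and hence (3).

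The main obstacle I anticipate is ensuring the \emph{strict} positivity of $w$ in the step (1) $\Rightarrow$ (3); an inverse-based construction $w = -A^{-1}\ones_n$ only guarantees $w\ge 0$ in general and would require additional irreducibility or M-matrix arguments to upgrade to $w>0$. The integral representation above sidesteps this difficulty and keeps the argument self-contained. All other implications are routine once the Danskin-formula-based equivalence (2) $\Leftrightarrow$ (3) is in place.
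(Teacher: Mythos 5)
Your proof is correct. There is, however, no in-paper proof to compare it against: the paper states Proposition~\ref{lem_linstab} as a known result and refers to \cite{book_berman_1994,rantzer_2015} for it. Your route is the standard self-contained one and parallels how the paper later argues in the nonlinear case (Theorem~\ref{thm_omegapath}): Danskin's formula (Proposition~\ref{lem_DVmaxsep}) links the weighted max-norm in statement 2) to the vector condition 3) in both directions (the bound $(Ax)_j\leq V(x)(Aw)_j$ for $j\in\Jcal(x)$ is exactly right, since $x_j=V(x)w_j$ holds with equality and the Metzler property handles the off-diagonal terms), statement 1) falls out of the decrease estimate, the integral $w=\int_0^\infty e^{tA}\ones_n\,\mathrm{d}t$ handles 1) $\Rightarrow$ 3), and 3) $\Leftrightarrow$ 4) follows because $\Delta A$ is again Metzler with $\Delta A w<0$. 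Two minor points worth a line each: you implicitly use that $e^{tA}\geq 0$ entrywise for Metzler $A$, and that convergence of trajectories from $\R^n_+$ forces $A$ to be Hurwitz (e.g.\ via $0\leq e^{tA}e_j\leq e^{tA}w/w_j\rightarrow 0$ for each $j$), both standard but currently only gestured at. Also, the ``obstacle'' you flag about $w=-A^{-1}\ones_n$ is not really one: that vector coincides with your integral, and your continuity-at-$t=0$ argument already delivers strict positivity without any irreducibility assumption; the integral representation is simply a cleaner way of seeing it.
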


In Proposition~\ref{lem_linstab}, the equivalence of statements $1)$ and~$2)$ demonstrates that the existence of a max-separable Lyapunov function is a necessary and sufficient condition for asymptotic stability of monotone linear systems. The positive scalars $v_i$ in the max-separable Lyapunov function (\ref{eqn_lem_linstab_maxsepV}) in the second item can be related to the positive vector $w$ in the third item as $v_i =w_i$ for all $i\in\Ical_n$. Statement $4)$ shows that stability of monotone linear systems is robust with respect to scaling of the rows of matrix $A$. This property is known as \textit{D-stability}~\cite{enthoven_1956}.  Note that the notions of asymptotic stability in Proposition~\ref{lem_linstab} hold globally due to linearity.

Another well-known property of monotone linear systems is that their asymptotic stability is insensitive to bounded and certain classes of unbounded time-delays. This property reads as
follows.

\begin{proposition}[\hspace{-0.2mm}\cite{Sun:12}]\label{prp_lineardelay}
\textit{
Consider the delay-free monotone (positive) system
\begin{align}
\dot{x}(t)=(A+B)x(t),
\label{eqn_syslin_delayfree}
\end{align}
with $A$ Metzler and $B$ having nonnegative elements. If~(\ref{eqn_syslin_delayfree}) is asymptotically stable, then the time-delay linear system
\begin{align}
\dot{x}(t)=Ax(t)+Bx(t-\tau(t))
\label{eqn_syslin_delay}
\end{align}
is  asymptotically stable for all time-varying and potentially unbounded delays satisfying
\begin{align*}
\lim_{t \rightarrow +\infty} t-\tau(t)=+\infty.
\end{align*}
}
 \vskip2mm
\end{proposition}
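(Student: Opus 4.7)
The plan is to use the max-separable Lyapunov function promised by Proposition~\ref{lem_linstab} as a Lyapunov--Razumikhin functional for the delay system. Because $A$ is Metzler and $B$ has nonnegative entries, the sum $A+B$ is Metzler; together with the asymptotic stability of~\eqref{eqn_syslin_delayfree}, the equivalence $1)\Leftrightarrow 3)$ in Proposition~\ref{lem_linstab} yields a vector $w>0$ with $(A+B)w<0$. I would then set $V(x)=\max_{i\in\Ical_n} x_i/w_i$ and observe that the positive orthant is forward invariant for~\eqref{eqn_syslin_delay} whenever the initial history is nonnegative, so $V$ is well defined along all relevant solutions.

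The core estimate is a Razumikhin-type decay of $V$. Fixing $t$ and any $j\in\Jcal(x(t))$, Proposition~\ref{lem_DVmaxsep} gives $D^+V(x(t)) = \dot x_j(t)/w_j$. Since $A$ is Metzler, $A_{ji}\geq 0$ for $i\neq j$; combining this with the identity $x_j(t)=w_j V(x(t))$ and the componentwise bound $x_i(t)\leq w_i V(x(t))$ yields $[Ax(t)]_j \leq V(x(t))[Aw]_j$. Under the Razumikhin hypothesis $V(x(t-\tau(t)))\leq V(x(t))$, the nonnegativity of $B$ similarly produces $[Bx(t-\tau(t))]_j\leq V(x(t))[Bw]_j$. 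Adding these estimates gives
\[
D^+V(x(t)) \;\leq\; V(x(t))\,\frac{[(A+B)w]_j}{w_j} \;\leq\; -\alpha\, V(x(t)),
\]
where $\alpha \defl \min_{i\in\Ical_n} -[(A+B)w]_i/w_i > 0$.

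The main obstacle is that the classical Razumikhin theorem is stated for bounded delays, whereas the hypothesis only requires $t-\tau(t)\to+\infty$. Stability itself follows from the usual contradiction: if $V(x(t))$ ever exceeded its supremum over the initial history, then at the first crossing time the Razumikhin hypothesis would be satisfied and the decay bound above would forbid a new maximum. For asymptotic stability, I would introduce $\mu^\star \defl \limsup_{t\to\infty}V(x(t))$, which is finite by stability, and argue by contradiction that $\mu^\star=0$. Given $\varepsilon>0$, there exists $T_1$ such that $V(x(t))<\mu^\star+\varepsilon$ for $t\geq T_1$, and because $t-\tau(t)\to+\infty$ there exists $T_2\geq T_1$ with $t-\tau(t)\geq T_1$ for $t\geq T_2$; hence on $[T_2,\infty)$ the delayed argument satisfies $V(x(t-\tau(t)))\leq \mu^\star+\varepsilon$. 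Repeating the calculation of the second paragraph with this relaxed bound on the delayed term, together with $V(x(t))\leq\mu^\star+\varepsilon$, yields a linear differential inequality that forces $\limsup_{t\to\infty}V(x(t))\leq(1-\eta)(\mu^\star+\varepsilon)$ for some $\eta>0$ independent of $\varepsilon$. Taking $\varepsilon\to 0$ contradicts the definition of $\mu^\star$ unless $\mu^\star=0$, establishing asymptotic stability of~\eqref{eqn_syslin_delay}.
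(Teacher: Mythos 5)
Your proposal is correct, but it is worth noting that the paper itself does not prove Proposition~\ref{prp_lineardelay}: the result is imported from \cite{Sun:12}, and the closest in-paper argument is the proof of the nonlinear generalization, Theorem~\ref{thm_delayindependent}, in Appendix~B. Your core idea coincides with that proof: use the max-separable (weighted max-norm) Lyapunov function of the delay-free system, here $V(x)=\max_i x_i/w_i$ with $(A+B)w<0$, as a Lyapunov--Razumikhin function, and your estimate $D^+V(x(t))\le -\alpha V(x(t))$ under $V(x(t-\tau(t)))\le V(x(t))$ is exactly the linear specialization of the paper's computation. Where you genuinely differ is in how the unbounded-delay condition $t-\tau(t)\to+\infty$ is handled: the paper constructs a Razumikhin function $q$ with $q(r)>r$ (via a Lipschitz constant of $g$ and a class-\classK lower bound on $\mu$) and then invokes Driver's Razumikhin theorem \cite[Theorem~7]{driver_1962}, whereas you prove stability by a first-crossing argument with the non-strict comparison $q=\mathrm{id}$ and prove attractivity directly through a $\limsup$ contraction, using $t-\tau(t)\to\infty$ to get $V(x(t-\tau(t)))\le\mu^\star+\varepsilon$ for large $t$ and hence $D^+V\le -a_jV+b_j(\mu^\star+\varepsilon)$ with $a_j=-[Aw]_j/w_j$ and $b_j=[Bw]_j/w_j$; since $[(A+B)w]_j<0$ and $Bw\ge 0$ give $a_j>b_j\ge 0$ (a sign fact you should state explicitly), the contraction factor $1-\eta$ is uniform in $\varepsilon$ and forces $\mu^\star=0$. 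This route is more elementary and self-contained in the linear case, at the price of exploiting the uniform linear decay constants, which is precisely what the paper's $q$-construction replaces in the nonlinear setting. Two minor points: Proposition~\ref{lem_DVmaxsep} gives $D^+V(x(t))=\max_{j\in\Jcal(x(t))}\dot x_j(t)/w_j$, a maximum over $\Jcal(x(t))$ rather than the value for ``any'' such $j$, but since your bound holds for every $j\in\Jcal(x(t))$ the conclusion is unaffected; and the forward invariance of the positive orthant for \eqref{eqn_syslin_delay} (from $A$ Metzler and $B\ge0$), which you correctly note, is what makes $V$ positive definite along the relevant trajectories and should be kept in the final write-up.
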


This results shows that asymptotic stability of the delay-free monotone linear system~(\ref{eqn_syslin_delayfree}) implies that~\eqref{eqn_syslin_delay} is also asymptotically stable. This is a significant property of monotone linear systems, since the introduction of time-delays may, in general, render a stable system unstable~\cite{Gu:03}.

\subsection{Main Goals}

The main objectives of this paper are $(i)$ to derive a counterpart of Proposition~\ref{lem_linstab} for monotone \emph{nonlinear} systems of the form~\eqref{eqn_sys}; and $(ii)$ to extend the delay-independent stability property of monotone \emph{linear} systems stated in Proposition~\ref{prp_lineardelay} to monotone \emph{nonlinear} systems with bounded and unbounded time-varying delays.

%
%

\section{Stability of monotone nonlinear systems}
\label{sec_monotonestab}

The following theorem is our first key result, which establishes a set of necessary and sufficient conditions for asymptotic stability of monotone nonlinear systems.

\begin{theorem}
\label{thm_omegapath}
\textit{
Assume that the nonlinear system (\ref{eqn_sys}) is monotone. Then, the following statements are equivalent:
\begin{enumerate}
\item[1)] The origin is asymptotically stable.
\item[2)] For some compact set of the form
\begin{align}
\Xcal = \bigl\{x\in\R_+^n \;|\; 0\leq x\leq v\bigr \},
\label{eqn_thm_omegapath_Xcal}
\end{align}
with $v>0$, there exists a max-separable Lyapunov function $V:\Xcal\rightarrow\R_+$ as in (\ref{eqn_maxsepV}) with $V_i:[0,v_i]\rightarrow\R_+$ differentiable for each $i\in\Ical_n$ such that
\begin{align}
        \nu_1(x_i) \leq V_i(x_i) \leq \nu_2(x_i),
        \label{eqn_thm_omegapath_Vbounds}
\end{align}
holds for all $x_i\in [0,v_i]$ and for some functions $\nu_1,\nu_2$ of class $\classK$, and that
\begin{align}
        D^+V(x) \leq -\mu(V(x)),
        \label{eqn_thm_omegapath_DV}
\end{align}
holds for all $x\in\Xcal$ and some positive definite function~$\mu$.
\item[3)] For some positive constant $\bar{s}>0$, there exists a function $\rho:[0,\bar{s}]\rightarrow\R_+^n$ with $\rho_i$ of class $\classK$, $\smash{\rho_i^{-1}}$ differentiable on $[0,\rho_i(\bar{s})]$ and satisfying
\begin{align}
        \frac{\dd \rho_i^{-1}}{\dd s}(s) > 0,
        \label{eqn_thm_omegapath_drhods}
\end{align}
      for all $s\in\bigl(0,\rho_i(\bar{s})\bigr]$ and  all $i\in\Ical_n$, such that
\begin{align}
        f\circ\rho(s) \leq -\alpha(s),
        \label{eqn_thm_omegapath_rhoineq}
\end{align}
      holds for $s\in [0,\bar{s}]$ and some function $\alpha:[0,\bar{s}]\rightarrow\R_+^n$ with $\alpha_i$ positive definite for all $i \in \Ical_n$.
\item[4)] For any function $\psi:\R^n_+ \times \R^n\ \rightarrow\R^n$ given by $\psi(x,y) =\bigl(\begin{array}{ccc}\psi_1(x_1,y_1), &\ldots&, \psi_n(x_n,y_n)\end{array}\bigr)^{\T}$ where
\begin{itemize}
\item $\psi_i:\R_+ \times \R\ \rightarrow\R$ for $i \in \Ical_n$,
\item $\psi_i(x_i,0) = 0$ for any $x_i\in \R_+$ and all $i \in \Ical_n$,
\item $\psi_i(x_i,y_i)$ is monotonically increasing in $y_i$ for each nonzero $x_i$, \textit{i.e.}, the implication
\begin{align}
y_i' < y_i \;\;\Rightarrow\;\; \psi_i(x_i,y_i') < \psi_i(x_i,y_i)
\label{eqn_thm_dstab_phimonotone}
\end{align}
holds for any $x_i>0$ and all $i\in\Ical_n$,
\end{itemize}
the nonlinear system
\begin{align}
\dot{x} = \psi\bigl(x, f(x)\bigr),
\label{eqn_thm_dstab_sysphi}
\end{align}
has an asymptotically stable equilibrium point at the origin.
\end{enumerate}
}
\end{theorem}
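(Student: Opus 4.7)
The plan is to close the chain of equivalences by proving $2)\Rightarrow 1)$, $2)\Leftrightarrow 3)$, $1)\Rightarrow 2)$, $3)\Rightarrow 4)$, and $4)\Rightarrow 1)$. The direction $2)\Rightarrow 1)$ is a textbook Lyapunov argument: the class \classK bounds~\eqref{eqn_thm_omegapath_Vbounds} sandwich $V$ between two class \classK functions of $\|x\|$ on the compact set $\Xcal$, and~\eqref{eqn_thm_omegapath_DV} forces $V$ to decay along trajectories by a standard comparison lemma. For $1)\Rightarrow 2)$ I would appeal directly to the converse Lyapunov results of~\cite{rantzer_2013,dirr_2015}, which for an asymptotically stable monotone system produce a max-separable Lyapunov function with the required regularity on any compact sublevel set inside the domain of attraction, from which a set of the form~\eqref{eqn_thm_omegapath_Xcal} is readily extracted. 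Finally, $4)\Rightarrow 1)$ is immediate from the identity scaling $\psi_i(x_i,y_i) = y_i$, which trivially satisfies every hypothesis in~$4)$.

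The technical heart is the bijection between max-separable Lyapunov functions and $\omega$-paths. For $3)\Rightarrow 2)$, set $V_i(x_i) \defl \rho_i^{-1}(x_i)$ on $[0,\rho_i(\bar s)]$ and choose $\Xcal$ with $v_i = \rho_i(\bar s)$; condition~\eqref{eqn_thm_omegapath_drhods} makes each $V_i$ differentiable with strictly positive slope, while $\nu_1(r)\defl \min_i \rho_i^{-1}(r)$ and $\nu_2(r)\defl \max_i \rho_i^{-1}(r)$ supply the class \classK bounds. For the Dini estimate, pick any $x\in\Xcal$ and $j\in\Jcal(x)$; the point $\rho(V(x))$ dominates $x$ componentwise and coincides with $x$ in component~$j$, so Proposition~\ref{thm_kamkecondition} yields $f_j(x) \leq f_j(\rho(V(x))) \leq -\alpha_j(V(x))$, and Proposition~\ref{lem_DVmaxsep} converts this into~\eqref{eqn_thm_omegapath_DV} with $\mu(s) \defl \min_j \frac{\partial V_j}{\partial x_j}(\rho_j(s))\,\alpha_j(s)$. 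Conversely, for $2)\Rightarrow 3)$, take $\rho_i \defl V_i^{-1}$ so that every index attains the maximum at $x = \rho(s)$, i.e.\ $\Jcal(\rho(s)) = \Ical_n$; Proposition~\ref{lem_DVmaxsep} then forces \emph{each} component to satisfy $\frac{\partial V_j}{\partial x_j}(\rho_j(s))\,f_j(\rho(s)) \leq -\mu(s)$, from which~\eqref{eqn_thm_omegapath_rhoineq} follows with $\alpha_j(s) \defl \mu(s)\bigl/\frac{\partial V_j}{\partial x_j}(\rho_j(s))$, inheriting positive definiteness from $\mu$.

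For $3)\Rightarrow 4)$, I would first verify that $\tilde f(x) \defl \psi(x,f(x))$ is again cooperative: if $x'\leq x$ and $x'_i = x_i$, Proposition~\ref{thm_kamkecondition} applied to $f$ gives $f_i(x')\leq f_i(x)$, and~\eqref{eqn_thm_dstab_phimonotone} then yields $\tilde f_i(x') = \psi_i(x_i,f_i(x'))\leq \psi_i(x_i,f_i(x)) = \tilde f_i(x)$, so~\eqref{eqn_thm_dstab_sysphi} is monotone. The same path $\rho$ serves as an $\omega$-path for $\tilde f$: by~\eqref{eqn_thm_omegapath_rhoineq}, $\psi_i(x_i,0)=0$, and the strict monotonicity~\eqref{eqn_thm_dstab_phimonotone}, the bound
\begin{align*}
\tilde f_i(\rho(s)) = \psi_i\bigl(\rho_i(s),f_i(\rho(s))\bigr) \leq \psi_i\bigl(\rho_i(s),-\alpha_i(s)\bigr) < 0
\end{align*}
holds for all $s>0$, so $\tilde\alpha_i(s) \defl -\psi_i(\rho_i(s),-\alpha_i(s))$ is positive definite. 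Applying the already-established implication $3)\Rightarrow 1)$ to the monotone system~\eqref{eqn_thm_dstab_sysphi} closes the cycle.

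The principal obstacle lies in the tight passage $2)\Leftrightarrow 3)$: the Dini estimate only controls a maximum over indices, yet statement~$3)$ demands that \emph{every} component of $f\circ\rho$ be individually negative. The argument relies crucially on choosing $\rho = (V_1^{-1},\ldots,V_n^{-1})^{\T}$ so that every index attains the maximum simultaneously along the path, after which the max reduces to a componentwise bound via the strict positivity of $\partial V_j/\partial x_j$. A secondary technical point is translating between the scalar rate $\mu$ and the vector-valued rate $\alpha$ without losing positive definiteness at $s=0$, which is handled cleanly by the minima above on the compact set $\Xcal$; the invocation of~\cite{rantzer_2013,dirr_2015} for $1)\Rightarrow 2)$ is substantial but off-the-shelf.
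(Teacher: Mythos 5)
Your overall architecture coincides with the paper's ($1)\Leftrightarrow2)\Leftrightarrow3)\Rightarrow4)\Rightarrow1)$, the pairing $\rho_i=V_i^{-1}$, reuse of the same path $\rho$ for $3)\Rightarrow4)$, and the identity scaling for $4)\Rightarrow1)$), and your streamlined choices in $2)\Leftrightarrow3)$ are legitimate: since $x_j=\rho_j(V(x))$ exactly for $j\in\Jcal(x)$, taking $\mu(s)=\min_j \pd{V_j}{x_j}(\rho_j(s))\,\alpha_j(s)$ works without the paper's infima $\bar r_j$. However, there is a genuine gap in your $2)\Rightarrow3)$: you define $\rho_i\defl V_i^{-1}$ and divide by $\pd{V_j}{x_j}(\rho_j(s))$, appealing to ``strict positivity of $\partial V_j/\partial x_j$'' — but statement $2)$ only assumes each $V_i$ is differentiable, sandwiched as in \eqref{eqn_thm_omegapath_Vbounds}, and that \eqref{eqn_thm_omegapath_DV} holds; none of this says $V_i$ is monotone, so neither invertibility of $V_i$ nor positivity of its derivative is available for free. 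The paper isolates exactly this point as its key Lemma~\ref{lem_Vipositivederivative}: evaluating \eqref{eqn_thm_omegapath_DV} at axis points $x=e_jx_j$, where $\Jcal(e_jx_j)=\{j\}$, gives $\pd{V_j}{x_j}(x_j)f_j(e_jx_j)<0$ for $x_j>0$, so the derivative never vanishes, has constant sign, and a negative sign would contradict $V_j\geq\nu_1\geq0$. Without this step your construction of $\rho$ is not even well-defined. (A minor further point: your $\alpha_j(s)=\mu(s)\big/\pd{V_j}{x_j}(\rho_j(s))$ can be $0/0$ at $s=0$ if $\pd{V_j}{x_j}(0)=0$; set $\alpha_j(0)=0$ or, as the paper does, divide by the supremum of the derivative over $[\rho_j(s),\rho_j(\bar s)]$.)

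A second gap is in $1)\Rightarrow2)$: you treat \cite{rantzer_2013,dirr_2015} as supplying, off the shelf, a max-separable Lyapunov function ``with the required regularity.'' They do not: as the paper itself remarks, the functions constructed there need not be differentiable or even continuous, whereas statement $2)$ demands differentiable $V_i$, and this differentiability is load-bearing for Proposition~\ref{lem_DVmaxsep}, for Lemma~\ref{lem_Vipositivederivative}, and hence for your own $2)\Rightarrow3)$ argument. The paper closes this by first producing a vector $v>0$ in the domain of attraction with $f(v)<0$ (the argument of \cite{ruffer_2010b,ruffer_2010c}), noting that the trajectory $\omega(t)=x(t,v)$ remains in the forward-invariant set $\{x\mid f(x)<0\}$ and therefore has strictly decreasing, continuously differentiable components, and then defining $V_i(x_i)=e^{-\omega_i^{-1}(x_i)}$, with the decay estimate $D^+V\leq -V$ taken from \cite{dirr_2015}. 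You need this (or some comparable smoothing construction); merely citing the converse Lyapunov results does not deliver statement $2)$ as written. The remaining implications in your proposal ($2)\Rightarrow1)$, $3)\Rightarrow4)$, $4)\Rightarrow1)$) match the paper's proof and are fine.
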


\begin{proof}
The proof is given in Appendix~A.
\end{proof}

Theorem~\ref{thm_omegapath} can be regarded as a nonlinear counterpart of Proposition~\ref{lem_linstab}. Namely, choosing the functions $V_i$ in the second statement of Theorem~\ref{thm_omegapath} as $V_i(x_i) = x_i/v_i$, $v_i>0$, and the function $\rho$ in the third statement as $\rho(s) = ws$, $w>0$, recovers statements $2)$ and~$3)$ in Proposition~\ref{lem_linstab}, respectively. Note that we can let $v_i=w_i$ for each $i \in \Ical_n$ since, according to the proof of Theorem~\ref{thm_omegapath}, the relation between $V$ and $\rho$ is
\begin{align*}
V_i(x_i) = \rho_i^{-1}(x_i)\;\;\textup{and}\;\;\rho_i(s) = V_i^{-1}(s),\;i\in\Ical_n.
\end{align*}
Statement $4)$ of Theorem~\ref{thm_omegapath}, which can be regarded as a notion of D-stability for nonlinear monotone systems, is a counterpart of the fourth statement of Proposition~\ref{lem_linstab}. More precisely, the choice $\psi(x,y) = \Delta y$ for some diagonal matrix $\Delta$ with positive diagonal entries recovers the corresponding result for monotone linear systems in Proposition~\ref{lem_linstab}.

\begin{remark}
\label{remark_region}
According to the proof of Theorem~\ref{thm_omegapath}, if there is a function $\rho$ satisfying the third statement for $s\in[0,\bar{s}]$, then the Lyapunov function (\ref{eqn_maxsepV}) with components $V_i(x_i) = \rho_i^{-1}(x_i)$ guarantees asymptotic stability of the origin for any initial condition
\begin{align*}
x_0\in \Xcal = \bigl\{x\in\R_+^n \;|\; 0\leq x\leq \rho(\bar{s})\bigr \}.
\end{align*}
This means that the set $\Xcal$ is an estimate of the region of attraction for the origin.
\end{remark}

We now present a simple example to illustrate the use of Theorem~\ref{thm_omegapath}.

\begin{Example}
\label{nondelayedExample}
Consider the nonlinear dynamical system
\begin{align}
\dot{x}= f(x)= \begin{bmatrix} -5x_1+x_1x_2^2 \\  x_1- 2x_2^2 \end{bmatrix}.
\label{Example1}
\end{align}
This system has an equilibrium point at the origin. As the Jacobian matrix of $f$ is Metzler for all $(x_1,x_2)\in\R^2_+$, $f$ is cooperative. Thus, according to Proposition~\ref{thm_kamkecondition},~\eqref{Example1} is monotone on $\R^2_+$.

First, we will show that the origin is asymptotically stable. Let $\rho(s)=(s, \sqrt{s})$, $s\in [0,4]$. For each $i\in\{1,2\}$, $\rho_i$ is of class $\classK$. It is easy to verify that
\begin{align*}
\smash{\rho_1^{-1}}(s)=s,\;\;\smash{\rho_2^{-1}}(s)=s^2.
\end{align*}
Thus, $\smash{\rho_i^{-1}}$, $i\in\{1,2\}$, is continuously differentiable and satisfies~\eqref{eqn_thm_omegapath_drhods}. In addition,
\begin{align*}
 f\circ\rho(s) = \begin{bmatrix} -5s+s^2 \\  - s \end{bmatrix}\leq - \begin{bmatrix} s \\   s \end{bmatrix},
\end{align*}
for all $s\in [0,4]$, which implies that \eqref{eqn_thm_omegapath_rhoineq} holds. It follows from the equivalence  of statements $1)$ and $3)$ in Theorem~\ref{thm_omegapath} that the origin is asymptotically stable.

Next, we will estimate the region of attraction of the origin by constructing a max-separable Lyapunov function. According to Remark~\ref{remark_region}, the monotone system~\eqref{Example1} admits the max-separable Lyapunov function
\begin{align*}
V(x)=\max\left\{x_1,x_2^2\right\}
\end{align*}
that guarantees the origin is asymptotically stable for 
\begin{align*}
x_0\in \bigl\{x\in\R_+^2 \;|\; 0\leq x\leq (4,2)\bigr \}.
\end{align*}

Finally, we discuss D-stability of the monotone system~\eqref{Example1}. Consider $\psi$ given by
\begin{align*}
\psi(x,y)= \left(\begin{array}{cc} \frac{x_1}{x_1^2+1}y_1^3\;, & x_2^2 y_2 \end{array}\right)^{\T}.
\end{align*}
For any $x\in\R^2_+$, $\psi(x,0)=0$. Moreover, each component $\psi_i(x_i,y_i)$, $i\in\{1,2\}$, is monotonically increasing for any $x_i>0$. Since the origin is an asymptotically stable equilibrium point of~\eqref{Example1}, by the equivalence  of statements $1)$ and $4)$ in Theorem~\ref{thm_omegapath}, the monotone nonlinear system
\begin{align*}
\dot{x} = \psi\bigl(x,f(x)\bigr)=\begin{bmatrix} \frac{x_1}{x_1^2+1}\left(-5x_1+x_1x_2^2\right)^3 \\ x_2^2\left( x_1- 2x_2^2\right) \end{bmatrix}
\end{align*}
has an asymptotically stable equilibrium point at the origin.
\end{Example}

\begin{remark}
\label{remark_equ}
A consequence of the proof of Theorem~\ref{thm_omegapath} is that all statements $1)$--$4)$ are equivalent to the existence of a vector $w>0$ such that $f(w)<0$ and
\begin{align}
\lim_{t\rightarrow \infty} x(t,w)=0.
\label{extra-stability-assumption}
\end{align}
Contrary to statement $3)$ of Proposition~\ref{lem_linstab} for monotone linear systems, the condition $f(w)<0$ without the additional assumption~\eqref{extra-stability-assumption} does not necessarily guarantee asymptotic stability of the origin for monotone nonlinear systems. To illustrate the point, consider, for example, a scalar monotone system described by (\ref{eqn_sys}) with $f(x)=-x(x-1)$, $x\in\R_+$. This system has two equilibrium points:  $x^\star=0$ and $x^{\star}=1$. Although $f(2)<0$, it is easy to verify that any trajectory starting from the initial condition $x_0>0$ converges to $x^\star=1$. Hence, the origin is not stable.
\end{remark}

\begin{remark}
Several implications in Theorem~\ref{thm_omegapath} are based on similar results in the literature. Namely, the implication $1)\Rightarrow 2)$ was shown in \cite{rantzer_2013} (see also \cite{dirr_2015}) by construction of a max-separable Lyapunov function that is not necessarily differentiable or even continuous (see \cite[Example~2]{dirr_2015}). The implication $3) \Rightarrow 2)$ was proven before in~\cite[Theorem~III.2]{ruffer_2010c} for the case of global asymptotic stability of monotone nonlinear systems considering a max-separable Lyapunov function with possibly non-smooth components $V_i$. The implication $1) \Rightarrow 4)$ was shown in \cite{mason_2009,bokharaie_2010,Vahid:11} for particular classes of scaling function $\psi$. For example,  if we choose $\psi_i(x_i,y_i)=d_i(x_i)y_i$ with $d_i(x_i)>0$ for $x_i>0$, then statement~$4)$ recovers the results in~\cite{bokharaie_2010,Vahid:11}. However, contrary to~\cite{mason_2009,bokharaie_2010,Vahid:11}, neither homogeneity nor sub-homogeneity of $f$ is required in Theorem~\ref{thm_omegapath}.
\end{remark}

%
%

\section{Stability of monotone systems with delays}
\label{sec_delaysystems}

In this section, delay-independent stability of nonlinear systems of the
form
\begin{align}
& \left\{
\begin{array}{rll}
\dot{x}(t) &= g\bigl(x(t),x(t-\tau(t))\bigr)\!&,\; t\geq 0,\\
x(t) &= \varphi(t)\vphantom{\bigl(} &,\; t\in[-\tau_{\max},0],
\end{array}
\right.
\label{Delaysystem}
\end{align}
is considered. Here, $g:\R^n_+ \times \R^n_+ \rightarrow \R^n$ is locally Lipschitz continuous with $g(0,0)=0$, $\varphi\in \mathcal{C}\bigl([-\tau_{\max},0],\R_+^n\bigr)$ is the vector-valued function specifying the initial state of the system, and  $\tau$ is the time-varying delay which satisfies the following assumption:

\begin{assumption}
\label{General Delays Assumption}
The delay $\tau:\R_+\rightarrow \R_+$ is continuous with respect to time and satisfies
\begin{align}
\label{General Delays Condition}
\lim_{t \rightarrow +\infty} t-\tau(t)=+\infty.
\end{align}
\vskip2mm%
\end{assumption}

Note that $\tau$ is not necessarily continuously differentiable and that no restriction on its derivative (such as $\dot{\tau}(t)<1$) is imposed. Roughly speaking, condition~\eqref{General Delays Condition} implies that as $t$ increases, the delay $\tau(t)$ grows slower than time itself. It is easy to verify that all bounded delays, irrespectively of whether they are constant or time-varying, satisfy Assumption~\ref{General Delays Assumption}. Moreover, delays satisfying~\eqref{General Delays Condition} may be unbounded (take, for example, $\tau(t)=\gamma t$ with $\gamma \in (0,1)$).

Unlike the non-delayed system~\eqref{eqn_sys}, the solution of the time-delay system~\eqref{Delaysystem} is not uniquely determined by a point-wise initial condition $x_0$, but by the continuous function $\varphi$ defined over the interval $[-\tau_{\max},0]$. Assumption~\ref{General Delays Assumption} implies that there is a sufficiently large $T>0$ such that $t-\tau(t)>0$ for all $t>T$. Define
\begin{align*}
\tau_{\max}=-\inf_{0\leq t\leq T}\biggl\{t-\tau(t)\biggr\}.
\end{align*}
Clearly, $\tau_{\max}\in\R_+$ is bounded ($\tau_{\max}<+\infty$). Therefore, the initial condition $\varphi$ is defined on a bounded set $[-\tau_{\max},0]$ for any delay satisfying Assumption~\ref{General Delays Assumption}, even if it is unbounded. Since $g$ is Lipschitz continuous and $\tau$ is a continuous function of time, the existence and uniqueness of solutions to~\eqref{Delaysystem} follow from \cite[Theorem~2]{driver_1962}. We denote the solution to~\eqref{Delaysystem} corresponding to the initial condition $\varphi$ by $x(t,\varphi)$.

From this point on, it is assumed that the time-delay system~\eqref{Delaysystem} satisfies the next assumption:

\begin{assumption}
\label{Assumption_delaysystem}
The following properties hold:
\begin{enumerate}
\item $g(x,y)$ satisfies Kamke condition in $x$ for each $y$, \textit{i.e.},
\begin{align}
x'\leq x\; \textup{and}\;x'_i = x_i \;\;\Rightarrow\;\; g_i(x',y)\leq g_i(x,y),
\end{align}
holds for any $y \in \R^n_+$ and all $i\in\Ical_n$.
\item $g(x,y)$ is order-preserving in $y$ for each $x$, \textit{i.e.},
\begin{align}
y'\leq y \;\;\Rightarrow\;\; g(x,y')\leq g(x,y),
\end{align}
holds for any $x \in \R^n_+ $.
\end{enumerate}
\end{assumption}

System~\eqref{Delaysystem} is called monotone if given two initial conditions  $\varphi,\varphi'\in \mathcal{C}\bigl([-\tau_{\max},0],\R_+^n\bigr)$ with $\varphi'(t)\leq \varphi(t)$ for all $t\in[-\tau_{\max},0]$, then
\begin{align}
x(t,\varphi')\leq x(t,\varphi),\quad \forall t\in \R_+.
\end{align}
It follows from \cite[Theorem 5.1.1]{book_smith_1995} that Assumption~\ref{Assumption_delaysystem} ensures the monotonicity of~\eqref{Delaysystem}. Furthermore, as the origin is an equilibrium for~\eqref{Delaysystem}, the positive orthant $\R^n_+$ is forward invariant, \textit{i.e.}, $x(t,\varphi)\in\R^n_+$ for all $t\in\R_+$ when $\varphi(t)\in\R^n_+$ for $t\in[-\tau_{\max},0]$.

We are interested in stability of the time-delay system (\ref{Delaysystem}) under the assumption that the delay-free system
\begin{align}
\dot{x}\bigl(t\bigr) = g\bigl(x(t),x(t)\bigr) \defr f\bigl(x(t)\bigr),
\label{eqn_sys_nondelayed}
\end{align}
has an asymptotically stable equilibrium point at the origin. Since time-delays may, in general, induce oscillations and even instability~\cite{Fridman:14}, the origin is not necessarily stable for the time-delay system (\ref{Delaysystem}). However, the following theorem shows that asymptotic stability of the origin for \textit{monotone} nonlinear systems is insensitive to time-delays satisfying Assumption~\ref{General Delays Assumption}.

\begin{theorem}
\label{thm_delayindependent}
\textit{
Consider the time-delay system~\eqref{Delaysystem} under Assumption~\ref{Assumption_delaysystem}. Then, the following statements are equivalent:
\begin{enumerate}
\item[1)] The time-delay monotone system~(\ref{Delaysystem}) has an asymptotically stable equilibrium point at the origin for all time varying-delays satisfying Assumption~\ref{General Delays Assumption}.
\item[2)] For the non-delayed monotone system~(\ref{eqn_sys_nondelayed}), any of the equivalent conditions in the statement of Theorem~\ref{thm_omegapath} hold.
\end{enumerate}
}
\end{theorem}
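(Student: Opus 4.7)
The plan divides along the two implications. The direction $1)\Rightarrow 2)$ is immediate by specializing to $\tau(t)\equiv 0$: the delay system~\eqref{Delaysystem} then reduces to~\eqref{eqn_sys_nondelayed}, so the assumed asymptotic stability of its origin is precisely statement~$1)$ of Theorem~\ref{thm_omegapath}, which is equivalent to all the other conditions stated there. For the converse $2)\Rightarrow 1)$, I would adopt the max-separable Lyapunov function constructed in Theorem~\ref{thm_omegapath} as a Lyapunov--Razumikhin function for the delayed dynamics. Specifically, from statement~$3)$ of Theorem~\ref{thm_omegapath} I extract a path $\rho:[0,\bar s]\to\R_+^n$ satisfying \eqref{eqn_thm_omegapath_drhods} and \eqref{eqn_thm_omegapath_rhoineq}, set $V_i(x_i) = \rho_i^{-1}(x_i)$ and $V(x) = \max_{i\in\Ical_n} V_i(x_i)$ on the compact set $\Xcal = \{x\in\R_+^n : 0 \le x \le \rho(\bar s)\}$.

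The central step is to establish the following Razumikhin-type inequality: for any trajectory of \eqref{Delaysystem} taking values in $\Xcal$, whenever $V(x(t-\tau(t)))\le V(x(t))$ holds, one has
\begin{align*}
D^+V(x(t)) \le -\eta\bigl(V(x(t))\bigr)
\end{align*}
for some positive definite function $\eta$. Writing $s = V(x(t))$, the max-separable structure forces $x(t)\le \rho(s)$ componentwise, with equality in every index $j$ in the maximizing set $\Jcal(x(t))$ from \eqref{eqn_lem_DVmaxsep_Jcal}; the Razumikhin hypothesis analogously yields $x(t-\tau(t))\le \rho(s)$. Applying the Kamke property for $g$ in its first slot (item~$1$ of Assumption~\ref{Assumption_delaysystem}) pushes $x(t)$ up to $\rho(s)$ in the $j$th component, and order-preservation in the second slot (item~$2$) pushes $x(t-\tau(t))$ up to $\rho(s)$, producing the chain $g_j(x(t),x(t-\tau(t)))\le g_j(\rho(s),\rho(s)) = f_j(\rho(s)) \le -\alpha_j(s)$ by \eqref{eqn_thm_omegapath_rhoineq}. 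Combining with Proposition~\ref{lem_DVmaxsep} and the positivity of $\dd \rho_i^{-1}/\dd s$ from \eqref{eqn_thm_omegapath_drhods} delivers the claimed negative bound.

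The remaining task is to convert this Razumikhin decrease into stability and attractivity of the origin. For stability, I would track the running supremum $v(t) = \sup_{s\in[-\tau_{\max},t]} V(x(s))$ and argue that it cannot grow: at any $t$ where the supremum is attained, the Razumikhin hypothesis is automatic, giving $D^+V(x(t))\le 0$. This keeps the trajectory inside $\Xcal$ for every $\varphi$ with $\sup_{s}V(\varphi(s))$ sufficiently small, hence yields Lyapunov stability in the sense of Definition~\ref{def_stability}. For attractivity, $v(t)$ is non-increasing and bounded below, so $v(t)\to v^{\star}\ge 0$; assuming $v^\star>0$ for contradiction, the fading-memory condition~\eqref{General Delays Condition} provides a time $T$ beyond which $V(x(t-\tau(t)))$ lies within an arbitrary tolerance of $v^\star$, and the Razumikhin inequality then forces a uniform strict decrease of $V(x(t))$ on the set where $V(x(t))$ is close to $v^\star$, ultimately pushing it below $v^\star$ and contradicting the definition of $v^\star$.

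The hard part is to make this convergence argument rigorous for the general, possibly unbounded, delays allowed by Assumption~\ref{General Delays Assumption}: classical Razumikhin-type theorems are formulated for bounded delays, so \eqref{General Delays Condition} must be used in an essential way to control $V(x(t-\tau(t)))$ on tails and convert it into a decrease on the current value $V(x(t))$. A secondary technicality is guaranteeing that the solution remains inside the domain $\Xcal$ where the path $\rho$ and the inequality~\eqref{eqn_thm_omegapath_rhoineq} are defined; this forces the Lyapunov stability step to be carried out first and justifies restricting attention to sufficiently small initial functions $\varphi$.
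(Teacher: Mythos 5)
Your overall strategy coincides with the paper's: the direction $1)\Rightarrow 2)$ by setting $\tau\equiv 0$, and the converse by using $V_i=\rho_i^{-1}$ from statement $3)$ of Theorem~\ref{thm_omegapath} as a Lyapunov--Razumikhin function for \eqref{Delaysystem}, with the decrease estimate obtained exactly as you describe (Kamke in the first argument at a maximizing index, order-preservation in the second, then \eqref{eqn_thm_omegapath_rhoineq}). The invariance of $\Xcal=\{0\leq x\leq\rho(\bar s)\}$ and Lyapunov stability also go through along the lines you sketch. The genuine gap is in the attractivity step. You derive the decrease only under the marginal Razumikhin hypothesis $V(x(t-\tau(t)))\leq V(x(t))$, and this is known to be insufficient for \emph{asymptotic} stability: the classical Razumikhin theorems require the decrease to hold under the strictly weaker hypothesis $V(x(s))< q(V(x(t)))$ for some continuous nondecreasing $q$ with $q(r)>r$. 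Your proposed repair does not close this. The running supremum $v(t)=\sup_{s\in[-\tau_{\max},t]}V(x(s))$ is taken over a growing window, so it is nondecreasing; once stability is established it is simply constant at its initial value and carries no information about decay. In the contradiction argument, at times when $V(x(t))$ dips below the putative limit $v^\star$ the Razumikhin hypothesis may fail (the delayed value can exceed the current one as long as values near $v^\star$ persist in the effective delay window), so $D^+V$ may be positive there and $V$ can creep back up toward $v^\star$; ruling out exactly this creep-back is what the strict margin $q(r)>r$ is for, and \eqref{General Delays Condition} alone does not substitute for it.

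The paper's proof supplies precisely this missing ingredient, and it is where most of the technical work lies: it shows the decrease estimate under the relaxed hypothesis $V(x(s))<q(V(x(t)))$ by quantifying how much $g_j(\bar x,\tilde x)$ can exceed $g_j(\bar x,\bar x)$, using a local Lipschitz constant $L_{\tilde\Xcal}$ of $g$ on the enlarged set $\{0\leq x\leq 2v\}$, a uniform bound $D$ on $\partial V_i/\partial x_i$, and a class-$\classK$ minorant $\tilde\mu$ of $\mu$, and then defines explicitly $q(s)=\min_i\rho_i^{-1}\bigl(\rho_i(s)+\tfrac{\tilde\mu(s)}{2kDL_{\tilde\Xcal}}\bigr)$, which satisfies $q(s)>s$. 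With this margin in hand, asymptotic stability for the unbounded delays of Assumption~\ref{General Delays Assumption} follows from Driver's Razumikhin theorem \cite{driver_1962}, which is formulated for exactly this class of delays; so the concern you raise about classical theorems assuming bounded delays is resolved by citing the right theorem rather than by a bespoke limsup argument. To complete your proposal you would either have to reproduce this quantitative construction of $q$ (including the choice of $k$ ensuring $\tilde x(t)$ stays in the enlarged compact set where the Lipschitz bound is valid) or give a genuinely different tail argument that creates a strict decrease margin; as written, the proposal identifies the difficulty but does not contain the idea that overcomes it.
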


\begin{proof}
The proof is given in Appendix~B.
\end{proof}

According to Theorem~\ref{thm_delayindependent}, local asymptotic stability of the origin for a delay-free monotone system of the form~\eqref{eqn_sys_nondelayed} implies local asymptotic stability of the origin also for~(\ref{Delaysystem}) with bounded and unbounded time-varying delays. Theorem~\ref{thm_delayindependent} does not explicitly give any estimate of the region of attraction for the origin. However, its proof shows that the
stability conditions presented in Theorem~\ref{thm_omegapath} for non-delayed monotone systems can provide such estimates, leading to the following practical tests.
\begin{itemize}
\item[\textbf{T1.}] Assume that for the delay-free monotone system~(\ref{eqn_sys_nondelayed}), we can characterize asymptotic stability of the origin through a max-separable Lyapunov function $V$  satisfying the second statement of Theorem~\ref{thm_omegapath}. Then, for the time-delay system~(\ref{Delaysystem}), the origin is asymptotically stable with respect to initial conditions satisfying 
         \begin{align*}
    \varphi(t)\in \bigl\{x\in\R^n_+ \;|\; 0\leq x_i \leq V_i^{-1}(c),\; i\in\Ical_n\bigr\},
    \end{align*}
for $t\in[-\tau_{\max},0]$ with 
\begin{align*}
c=\min_{i \in \Ical_n} V_i(v_i).
\end{align*}
\item[\textbf{T2.}]  If we demonstrate the existence of a function $\rho$ such that the non-delayed system~(\ref{eqn_sys_nondelayed}) satisfies the third statement  of Theorem~\ref{thm_omegapath}, then~(\ref{Delaysystem}) with time-delays satisfying Assumption~\ref{General Delays Assumption} has an asymptotically stable equilibrium point at the origin for which the region of attraction includes initial conditions $\varphi$ that satisfy
    \begin{align*}
    0\leq \varphi(t)\leq \rho(\bar{s}),\quad t\in[-\tau_{\max},0].
    \end{align*}

\item[\textbf{T3.}] If we find a vector $w>0$ such that $g(w,w)<0$ and that the solution $x(t,w)$ to the delay-free monotone system~(\ref{eqn_sys_nondelayed}) converges to the origin, then the solution $x(t,\varphi)$ to the time-delay system~\eqref{Delaysystem} converges to the origin for any initial condition $\varphi$ that satisfies
    \begin{align*}
    0\leq \varphi(t)\leq w, \quad t\in[-\tau_{\max},0].
    \end{align*}
\end{itemize}

The following example illustrates the results of Theorem~\ref{thm_delayindependent}.

\begin{Example}
\label{DelayExample}
Consider the time-delay system
\begin{align}
\dot{x}(t)= g\bigl(x(t),x(t-\tau(t))\bigr)= \begin{bmatrix} -5x_1\bigl(t\bigr)+x_1\bigl(t\bigr)x_2^2\bigl(t-\tau(t)\bigr) \\  x_1\bigl(t-\tau(t)\bigr)- 2x_2^2\bigl(t\bigr) \end{bmatrix}.
\label{Example2}
\end{align}
One can verify that $g$ satisfies Assumption~\ref{Assumption_delaysystem}. Thus, the system~\eqref{Example2} is monotone on $\R^2_+$. According to Example~\ref{nondelayedExample}, this system without time-delays has an asymptotically stable equilibrium at the origin. Therefore, Theorem~\ref{thm_delayindependent} guarantees that for the time-delay system~\eqref{Example2}, the origin is still asymptotically stable for any bounded and unbounded time-varying delays satisfying Assumption~\ref{General Delays Assumption}.

We now provide an estimate of the region of attraction for the origin. Example~\ref{nondelayedExample} shows that for the system~\eqref{Example2} without time-delays, the function $\rho(s)=(s,\sqrt{s})$, $s\in[0,4]$, satisfies the third statement of Theorem~\ref{thm_omegapath}. It follows from stability test \textbf{T2} that the solution $x(t,\varphi)$ to~\eqref{Example2} starting from initial conditions
\begin{align*}
0\leq \varphi(t)\leq (\begin{array}{cc} 4, & 2 \end{array})^{\T},\quad t\in[-\tau_{\max},0].
\end{align*}
converges to the origin.
\end{Example}

\begin{remark}
Our results can be extended to monotone nonlinear systems with heterogeneous delays of the form
\begin{align}
\dot{x}_i(t) &= g_i\bigl(x(t),x^{\tau_i}(t)\bigr),\quad i\in\Ical_n,
\label{sys_heterogeneous}
\end{align}
where $g\bigl(x,y\bigr)=\bigl(g_1(x,y),\ldots,g_n(x,y)\bigr)^{\T}$ satisfies Assumption~\ref{Assumption_delaysystem}, and
\begin{align*}
x^{\tau_i}\bigl(t\bigr):=\bigl(x_1(t-\tau_i^1(t)),\ldots,x_n(t-\tau_i^n(t))\bigr)^{\T}.
\end{align*}
If the delays $\tau_i^j$, $i,j\in\Ical_n$, satisfy Assumption~\ref{General Delays Assumption}, then asymptotic stability of the origin for the delay-free monotone system~(\ref{eqn_sys_nondelayed}) ensures that~\eqref{sys_heterogeneous} with heterogeneous time-varying delays also has an asymptotically stable equilibrium point at the origin
\end{remark}

%
%

\section{Applications of the main results}
\label{sec_discussion}

In this section, we will present several examples to illustrate how our main results recover and generalize previous results on delay-independent stability of monotone nonlinear systems.

\subsection{Homogeneous monotone systems}
First, we consider a particular class of monotone nonlinear systems  whose vector fields are \textit{homogeneous} in the sense of the following definition.

\begin{definition}
Given an $n$-tuple $r = (r_1,\ldots,r_n)$ of positive real numbers and $\lambda> 0$,  the \textit{dilation map} $\delta^{r}_{\lambda}: \R^n \rightarrow \R^n$ is defined as
\begin{align*}
\delta^{r}_{\lambda}\bigl(x\bigr) := \bigl(\lambda^{r_1}x_1,\ldots,\lambda^{r_n}x_n\bigr).
\end{align*}
When $r =\ones_n$, the dilation map is called the standard dilation map. A vector field $f :\R^n \rightarrow \R^n$ is said to be \textit{homogeneous of degree $p\in\R$} with respect to the dilation map $\delta^{r}_{\lambda}$  if
\begin{align*}
f\bigl(\delta^{r}_{\lambda}(x)\bigr)=\lambda^{p}\delta^{r}_{\lambda}\bigl(f(x)\bigr),\quad \forall x\in\R^n,\; \forall \lambda>0.
\end{align*}
\vskip2mm
\end{definition}
Note that the linear mapping $f(x)=Ax$ is homogeneous of degree zero with respect to the standard dilation map.

The following result, which is a direct consequence of Theorem~\ref{thm_delayindependent}, establishes a necessary and sufficient condition for \textit{global} asymptotic stability of homogeneous monotone systems with time-varying delays. By global asymptotic stability, we mean that the origin is asymptotically stable for all nonnegative initial conditions.

\begin{Corollary}
\textit{
For the time-delay system~\eqref{Delaysystem}, suppose Assumption~\ref{Assumption_delaysystem} holds. Suppose also that $f(x):=g(x,x)$ is homogeneous of degree $p\in\R_+$ with respect to the dilation map $\delta^{r}_{\lambda}$. Then, the following statements are equivalent:
\begin{enumerate}
\item There exists a vector $w>0$ such that $f(w)<0$.
\item The homogeneous monotone system~\eqref{Delaysystem} has a globally asymptotically stable equilibrium point at the origin for all $\varphi\in \mathcal{C}\bigl([-\tau_{\max},0],\R_+^n\bigr)$ and all time-delays satisfying Assumption~\ref{General Delays Assumption}.
\end{enumerate}
}
\end{Corollary}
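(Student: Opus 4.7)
The plan is to deduce the corollary from Theorem~\ref{thm_delayindependent} by exploiting homogeneity to construct a globally valid path $\rho$ of the form required by statement~$3)$ of Theorem~\ref{thm_omegapath}. The direction $2)\Rightarrow 1)$ is immediate: taking $\tau\equiv 0$ (which trivially satisfies Assumption~\ref{General Delays Assumption}), the delay-free system~(\ref{eqn_sys_nondelayed}) is globally asymptotically stable at the origin, so Theorem~\ref{thm_omegapath} provides a path $\rho$ on some $[0,\bar{s}]$ with $f\circ\rho(s)\leq-\alpha(s)$; picking any $s\in(0,\bar{s}]$ and setting $w=\rho(s)$ gives $f(w)<0$, which is statement~$1)$.

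For $1)\Rightarrow 2)$, the central idea is that once a single strictly positive vector $w$ with $f(w)<0$ is available, homogeneity propagates the property along the entire dilation orbit: for all $\lambda>0$,
\[
f\bigl(\delta^r_\lambda(w)\bigr) = \lambda^p\,\delta^r_\lambda\bigl(f(w)\bigr) < 0.
\]
I would use this to build a path $\rho$ satisfying statement~$3)$ of Theorem~\ref{thm_omegapath} on an arbitrarily large interval $[0,\bar{s}]$. The naive choice $\rho(s)=\delta^r_s(w)=(s^{r_1}w_1,\dots,s^{r_n}w_n)$ meets the sign condition~(\ref{eqn_thm_omegapath_rhoineq}), but it fails the differentiability requirement on $\rho_i^{-1}$ at the origin whenever some $r_i>1$. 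The remedy is to reparameterize: setting $r_{\max}=\max_j r_j$, I take
\[
\rho_i(s) = s^{r_i/r_{\max}}\,w_i, \qquad s\geq 0,\; i\in\Ical_n,
\]
so that each $\rho_i$ is of class $\classK$ and each $\rho_i^{-1}(y)=(y/w_i)^{r_{\max}/r_i}$ has exponent $\geq 1$ and is therefore differentiable on $[0,\rho_i(\bar{s})]$ with strictly positive derivative on $(0,\rho_i(\bar{s})]$. Homogeneity then yields $f_i(\rho(s)) = s^{(p+r_i)/r_{\max}} f_i(w)$, so~(\ref{eqn_thm_omegapath_rhoineq}) holds with $\alpha_i(s) = s^{(p+r_i)/r_{\max}} |f_i(w)|$, which is positive definite since $p+r_i>0$.

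Statement~$3)$ of Theorem~\ref{thm_omegapath} therefore holds on $[0,\bar{s}]$ for every $\bar{s}>0$, and Remark~\ref{remark_region} yields an estimate of the region of attraction of the delay-free system~(\ref{eqn_sys_nondelayed}) containing $\{x\in\R_+^n : 0\leq x\leq \rho(\bar{s})\}$. Since $\rho_i(\bar{s})\to\infty$ as $\bar{s}\to\infty$, the origin is globally asymptotically stable for~(\ref{eqn_sys_nondelayed}). To transfer this to the delayed system, I apply stability test~\textbf{T2}: any initial condition $\varphi\in\mathcal{C}\bigl([-\tau_{\max},0],\R_+^n\bigr)$ is bounded by continuity on a compact interval, and choosing $\bar{s}$ large enough that $\rho(\bar{s})\geq \sup_{t\in[-\tau_{\max},0]}\varphi(t)$ componentwise places $\varphi$ inside the guaranteed region of attraction. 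The only real obstacle is the differentiability of $\rho_i^{-1}$ at the origin, which forces the rescaling by $s^{1/r_{\max}}$; every remaining step is a direct invocation of results already in hand from Theorems~\ref{thm_omegapath} and~\ref{thm_delayindependent}.
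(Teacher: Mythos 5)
Your proposal is correct and follows essentially the same route as the paper: both directions are handled identically, with the reparameterized path $\rho_i(s)=s^{r_i/r_{\max}}w_i$, homogeneity giving $f_i(\rho(s))=s^{(p+r_i)/r_{\max}}f_i(w)<0$, and global asymptotic stability obtained from stability test \textbf{T2} by letting $\bar{s}$ be arbitrarily large to cover any bounded continuous initial condition. Your explicit justification that $\rho_i^{-1}(y)=(y/w_i)^{r_{\max}/r_i}$ is differentiable at the origin is a nice spelled-out detail that the paper only asserts.
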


\begin{proof}
The implication $2)\Rightarrow 1)$ follows directly from Theorem~\ref{thm_delayindependent} and Remark~\ref{remark_equ}. We will show that $1)$ implies $2)$.

$1)\Rightarrow 2):$ Let $\rho_i(s)=s^{r_i/r_{\max}}w_i$, where
\begin{align*}
r_{\max}=\max_{i\in\Ical_n}\; r_i.
\end{align*}
For any positive constant $\bar{s}>0$, it is clear that $\rho_i$ is of class $\classK$ on $[0,\bar{s}]$, and $\smash{\rho_i^{-1}}$ is continuously differentiable and satisfy~\eqref{eqn_thm_omegapath_drhods}. As $f$ is homogeneous of degree $p\in\R_+$ with respect to the dilation map $\delta^{r}_{\lambda}$, it follows that
\begin{align}
f\circ\rho(s)=g\bigl(\rho(s),\rho(s)\bigr)&=g\bigl(\delta^{r}_{s^{1/r_{\max}}}(w),\delta^{r}_{s^{1/r_{\max}}}(w)\bigr)\nonumber\\
&= s^{p/r_{\max}}\delta^{r}_{s^{1/r_{\max}}}\bigl(g(w,w)\bigr)\nonumber\\
&=s^{p/r_{\max}}\delta^{r}_{s^{1/r_{\max}}}\bigl(f(w)\bigr).
\label{homogeneous_1}
\end{align}
Since $f(w)<0$, the right-hand side of equality~\eqref{homogeneous_1} is negative definite for all $s\in[0,\bar{s}]$. Therefore, according to Theorem~\ref{thm_delayindependent} and stability test \textbf{T2}, the time-delay system~\eqref{Delaysystem}  has an asymptotically stable equilibrium point at the origin for any $0\leq \varphi(t)\leq\rho(\bar{s})$, $t\in[-\tau_{\max},0]$.

To prove global asymptotic stability, suppose we are given a nonnegative initial condition $\varphi\in \mathcal{C}\bigl([-\tau_{\max},0],\R_+^n\bigr)$. Since $w_i>0$ for each $i \in \Ical_n$ and $\varphi$ is continuous (hence, bounded) on $[-\tau_{\max},0]$, there exists a sufficiently large $\bar{s}>0$ such that $\varphi(t)\leq \rho(\bar{s})$ for all $t\in[-\tau_{\max},0]$. It now follows immediately from the argument in the previous paragraph that the origin is asymptotically stable with respect to any nonnegative initial condition $\varphi$.
\end{proof}

\begin{remark}
 Delay-independent stability of homogeneous monotone systems with time-varying delays satisfying Assumption~\ref{General Delays Assumption} was previously considered in~\cite{feyzmahdavian_2014} by using a max-separable Lyapunov function with components
\begin{align*}
V_i(x_i)=\rho_i^{-1}(x_i)=\left(\frac{x_i}{w_i}\right)^{r_{\max}/r_i},\quad i\in \Ical_n.
\end{align*}
Note, however, that the proof of Theorem~\ref{thm_delayindependent} differs significantly from the analysis in~\cite{feyzmahdavian_2014}.  The main reason for this is that the homogeneity assumption, which plays a key role in the stability proof in~\cite{feyzmahdavian_2014}, is not satisfied for general monotone systems.
\end{remark}

\subsection{Sub-homogeneous monotone systems}

Another important class of monotone nonlinear systems are those with \textit{sub-homogeneous} vector fields:

\begin{definition}
A vector field $f:\R^{n}_+ \rightarrow \R^{n}$ is said to be sub-homogeneous of degree $p\in\R$ if
\begin{align*}
f(\lambda x)\leq \lambda^{p}f(x),\quad \forall x \in \R^{n}_+,\;\forall \lambda\geq 1.
\end{align*}
\vskip2mm
\end{definition}

Theorem~\ref{thm_delayindependent} allows us to show that \textit{global} asymptotic stability of the origin for monotone systems whose vector fields are sub-homogeneous is insensitive to bounded and unbounded time-varying delays.

\begin{Corollary}
\label{Corollary_subhomogeneous}
\textit{
Consider the time-delay system~\eqref{Delaysystem} under Assumption~\ref{Assumption_delaysystem}. Suppose also that $f(x):=g(x,x)$ is sub-homogeneous of degree $p\in\R_+$. If the origin for the delay-free monotone system~\eqref{eqn_sys_nondelayed} is globally asymptotically stable, then the sub-homogeneous monotone system~(\ref{Delaysystem}) has a globally asymptotically stable equilibrium at the origin for all time-varying delays satisfying Assumption~\ref{General Delays Assumption}.
}
\end{Corollary}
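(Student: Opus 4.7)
The strategy is to combine Theorem~\ref{thm_delayindependent}, Remark~\ref{remark_equ}, and stability test~\textbf{T3} with the scaling property offered by sub-homogeneity. The key observation is that the pair of conditions $f(w)<0$ and $x(t,w)\rightarrow 0$ required by~\textbf{T3} is preserved under scaling $w\mapsto\lambda w$ with $\lambda\geq 1$ whenever $f$ is sub-homogeneous of degree $p\in\R_+$, which will allow an arbitrarily large initial history to be dominated by a delay-free trajectory that provably converges to the origin.

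First, since the delay-free system~\eqref{eqn_sys_nondelayed} is globally asymptotically stable, it is in particular (locally) asymptotically stable at the origin, so Remark~\ref{remark_equ} yields a vector $w_0>0$ satisfying $f(w_0)<0$. Using sub-homogeneity of $f$, for every $\lambda\geq 1$ one has
\begin{align*}
f(\lambda w_0) \leq \lambda^{p} f(w_0) < 0,
\end{align*}
since $\lambda^{p}>0$ and $f(w_0)<0$. Consequently $g(\lambda w_0,\lambda w_0)=f(\lambda w_0)<0$. Furthermore, global asymptotic stability of~\eqref{eqn_sys_nondelayed} yields $\lim_{t\rightarrow\infty} x(t,\lambda w_0)=0$ for every such $\lambda$. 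Hence $w:=\lambda w_0$ satisfies both hypotheses of stability test~\textbf{T3}, irrespective of how large $\lambda$ is chosen.

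Now given any initial condition $\varphi\in\mathcal{C}([-\tau_{\max},0],\R^n_+)$, continuity on the compact interval $[-\tau_{\max},0]$ guarantees that $\varphi$ is bounded, so we may choose $\lambda\geq 1$ large enough that $\varphi(t)\leq\lambda w_0$ for every $t\in[-\tau_{\max},0]$. Invoking test~\textbf{T3} with $w=\lambda w_0$ then delivers $\lim_{t\rightarrow\infty} x(t,\varphi)=0$, which establishes global attractivity of the origin for the time-delay system. Stability of the origin follows directly from Theorem~\ref{thm_delayindependent} applied to the (local) asymptotic stability of the delay-free system, which is implied by its global asymptotic stability. Taken together, stability and global attractivity yield global asymptotic stability for every delay satisfying Assumption~\ref{General Delays Assumption}.

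The main obstacle here is conceptual rather than computational: Theorem~\ref{thm_delayindependent} and its companion tests~\textbf{T1}--\textbf{T3} are inherently local, whereas both the hypothesis and the conclusion of the corollary are global. Sub-homogeneity with exponent $p\geq 0$ is precisely the structural assumption that bridges this gap, because it preserves the sign of $f(\lambda w_0)$ under upward scaling and therefore allows the delay-independent region of attraction supplied by~\textbf{T3} to be inflated arbitrarily. I expect no difficulty with the delay structure itself: Assumption~\ref{General Delays Assumption} is only used implicitly through test~\textbf{T3}, so no differentiability or boundedness of $\tau$ is required beyond what is already built into the machinery of Theorem~\ref{thm_delayindependent}.
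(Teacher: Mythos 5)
Your proof is correct, and it follows the same overall skeleton as the paper's: invoke test \textbf{T3} with an arbitrarily large dominating vector $w$, then absorb any continuous (hence bounded) initial history $\varphi$ into the set $\{0\leq \varphi(t)\leq w\}$, and get Lyapunov stability of the origin from Theorem~\ref{thm_delayindependent}. The difference lies in how the arbitrarily large $w$ satisfying the hypotheses of \textbf{T3} is produced. The paper obtains it by citing an external result (\cite[Theorem~4.1]{Vahid:11}), which asserts that for globally asymptotically stable sub-homogeneous monotone systems one can find, for every $\alpha>0$, a vector $w\geq \alpha\ones_n$ whose delay-free trajectory converges to the origin. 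You instead derive it self-containedly: Remark~\ref{remark_equ} supplies $w_0>0$ with $f(w_0)<0$, sub-homogeneity gives $f(\lambda w_0)\leq \lambda^p f(w_0)<0$ for all $\lambda\geq 1$, and global asymptotic stability of the delay-free system gives $x(t,\lambda w_0)\rightarrow 0$. This buys two things: the argument stays entirely within the results established in the paper (plus the definition of sub-homogeneity), and it explicitly verifies the condition $g(w,w)<0$ that \textbf{T3} formally requires but that the paper's own proof passes over silently (it only mentions convergence of $x(t,w)$). The paper's route is shorter on the page but outsources exactly the scaling argument you carried out; the two are otherwise equivalent in strength and in their use of Assumption~\ref{General Delays Assumption} through Theorem~\ref{thm_delayindependent}.
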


\begin{proof}
The origin for the sub-homogeneous monotone system~\eqref{eqn_sys_nondelayed} is globally asymptotically stable. Thus, for any constant $\alpha>0$,  there exists a vector $w>0$ such that $\alpha \ones_n \leq w$ and the solution $x(t,w)$ to the delay-free system~(\ref{eqn_sys_nondelayed}) converges to the origin~\cite[Theorem 4.1]{Vahid:11}. It follows from Theorem~\ref{thm_delayindependent} and stability test \textbf{T3} that the time-delay system~\eqref{Delaysystem} is asymptotically stable with respect to initial conditions satisfying $0\leq \varphi(t)\leq \alpha \ones_n$, $t\in[-\tau_{\max},0]$.

To complete the proof, let $\varphi\in \mathcal{C}\bigl([-\tau_{\max},0],\R_+^n\bigr)$ be an arbitrary initial condition. As $\alpha >0$ and $\varphi$ is continuous (hence, bounded) on $[-\tau_{\max},0]$, we can find $\alpha>0$ such that $\varphi(t)\leq \alpha \ones_n$ for $t\in [-\tau_{\max},0]$. This together with the above observations implies that the origin is asymptotically stable for all nonnegative initial conditions.
\end{proof}

\begin{remark}
 In~\cite{Feyzmahdavian14:MTNS}, it was shown that global asymptotic stability of the origin for sub-homogeneous monotone systems is independent of \textit{bounded} time-varying delays. In this work, we establish insensitivity of sub-homogeneous monotone systems to the general class of possibly \textit{unbounded} delays described by Assumption~\ref{General Delays Assumption}, which includes bounded delays as a special case.
\end{remark}

\subsection{Sub-homogeneous (non-monotone) positive systems}
Finally, motivated by results in~\cite{Vahid:14}, we consider the time-delay system
\begin{equation}
\dot{x}(t) ={g}(x(t),x(t-\tau(t)))=h\bigl(x(t)\bigr)+d\bigl(x(t-\tau(t))\bigr).
\label{Delaysystem-example}
\end{equation}
We assume that $h$ and $d$ satisfy Assumption~\ref{Assumption_application}.

\begin{assumption}
\label{Assumption_application}
The following properties hold:
\begin{enumerate}
\item For each $i\in\Ical_n$, $h_i(x)\geq 0$ for $x\in\R^n_+$ with $x_i=0$;
\item For all $x\in\R^n_+$, $d(x)\geq 0$;
\item Both $h$ and $d$ are sub-homogeneous of degree $p\in\R_+$;
\item For any $x\in\R^n_+\setminus\{0\}$, there is $i\in\Ical_n$ such that
\begin{align*}
\sup\bigl\{d_i(z')\;|\;0&\leq z' \leq x\bigr\}<-\sup\bigl\{h_i(z)\;|\;0\leq z \leq x,z_i=x_i\bigr\}.
\end{align*}
\end{enumerate}
\vskip2mm
\end{assumption}

Note that under Assumption~\ref{Assumption_application}, the time-delay system~\eqref{Delaysystem-example} is not necessarily
monotone. However, Assumptions~\ref{Assumption_application}.1 and~\ref{Assumption_application}.2 ensure the positivity of~\eqref{Delaysystem-example}~\cite[Theorem 5.2.1]{book_smith_1995}.

In~\cite{Vahid:14}, it was shown that  if Assumption~\ref{Assumption_application} holds, then the positive nonlinear system~\eqref{Delaysystem-example} with \textit{constant} delays $(\tau(t) = \tau_{\max}$, $t\in\R_+)$  has a globally asymptotically stable equilibrium at the origin for all $\tau_{\max}\in\R_+$.  Theorem~\ref{thm_delayindependent} helps us to extend the result in~\cite{Vahid:14} to \textit{time-varying} delays satisfying Assumption~\ref{General Delays Assumption}.

\begin{Corollary}
\textit{
For the time-delay system~\eqref{Delaysystem-example}, suppose Assumption~\ref{Assumption_application}  holds. Then, the origin is globally asymptotically stable for all time-delays satisfying Assumption~\ref{General Delays Assumption}.
}
\end{Corollary}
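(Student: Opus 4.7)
The plan is to dominate the non-monotone positive system \eqref{Delaysystem-example} from above by an auxiliary \emph{monotone, sub-homogeneous} time-delay system, and then invoke Corollary~\ref{Corollary_subhomogeneous} on that auxiliary system.

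First, I define the coordinate-wise upper envelopes
\begin{align*}
\bar{h}_i(x) &\defl \sup\bigl\{h_i(z)\;|\;0\leq z\leq x,\;z_i=x_i\bigr\}, \\
\bar{d}_i(x) &\defl \sup\bigl\{d_i(z')\;|\;0\leq z'\leq x\bigr\},\quad i\in\Ical_n,
\end{align*}
and set $\bar{g}(x,y)\defl\bar{h}(x)+\bar{d}(y)$. Routine monotone-envelope arguments show that $\bar g$ satisfies both parts of Assumption~\ref{Assumption_delaysystem}: the constraint $z_i=x_i$ forces the Kamke condition in $x$, while the sup over $z'\leq y$ forces order-preservation in $y$. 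Taking $z=x$ and $z'=y$ in the respective suprema yields $\bar g(x,y)\geq g(x,y)$ componentwise on $\R_+^n\times\R_+^n$. A rescaling $z=\lambda z'$ combined with Assumption~\ref{Assumption_application}.3 shows that $\bar h$ and $\bar d$, and hence $\bar f(x)\defl\bar g(x,x)$, are sub-homogeneous of degree $p\in\R_+$.

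Next, I argue that the delay-free comparison $\dot y=\bar f(y)$ has a globally asymptotically stable equilibrium at the origin. Assumption~\ref{Assumption_application}.4 is precisely the statement that $\bar f_i(x)<0$ for some $i=i(x)$ at every $x\in\R_+^n\setminus\{0\}$. Coupling this pointwise negativity with cooperativity of $\bar f$ and its sub-homogeneity, I would produce, on every prescribed bounded set, a vector $w>0$ satisfying $\bar f(w)<0$ componentwise. Monotonicity then makes $t\mapsto y(t,w)$ componentwise non-increasing, so it converges to an equilibrium $\bar y\in[0,w]$ of $\bar f$; by the same pointwise condition, $\bar y=0$, which yields $y(t,w)\to 0$. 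Remark~\ref{remark_equ} together with Theorem~\ref{thm_omegapath} then gives local asymptotic stability on $[0,w]$, and a sub-homogeneous scaling argument, as in the proof of Corollary~\ref{Corollary_subhomogeneous}, lifts this to global asymptotic stability.

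With the delay-free comparison GAS, Corollary~\ref{Corollary_subhomogeneous} applied to the monotone, sub-homogeneous time-delay system $\dot y(t)=\bar g(y(t),y(t-\tau(t)))$ yields global asymptotic stability of its origin for every delay satisfying Assumption~\ref{General Delays Assumption}. A standard comparison lemma for monotone delay systems, based on $g\leq \bar g$ together with the Kamke/order-preserving structure of $\bar g$ and positivity of~\eqref{Delaysystem-example}, then sandwiches solutions as $0\leq x(t,\varphi)\leq y(t,\varphi)$ whenever both systems are launched from the same $\varphi\in\mathcal{C}([-\tau_{\max},0],\R_+^n)$, delivering $x(t,\varphi)\to 0$. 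I expect the main obstacle to be the middle step: converting the \emph{coordinate-wise} negativity in Assumption~\ref{Assumption_application}.4 into a \emph{simultaneous} strict inequality $\bar f(w)<0$ at a single $w>0$, since this is what triggers the sub-homogeneous monotone machinery. This step requires a careful interplay between cooperativity of $\bar f$, sub-homogeneity of degree $p\geq 0$, and the sup structure of condition~4, most likely through a monotone-sequence or fixed-point argument on an appropriate chain of points in $\R_+^n$.
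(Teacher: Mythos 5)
Your overall architecture is exactly the paper's: you build the same upper-envelope system (your $\bar h_i(x)+\bar d_i(y)$ coincides with the paper's $\bar g_i(x,y)=\sup\{h_i(z)+d_i(z')\mid 0\leq z\leq x,\,z_i=x_i,\,0\leq z'\leq y\}$, since the constraints on $z$ and $z'$ decouple), you verify Assumption~\ref{Assumption_delaysystem}, sub-homogeneity and $g\leq\bar g$ in the same way, you invoke Corollary~\ref{Corollary_subhomogeneous} on the envelope system, and you close with the same comparison/positivity sandwich (the paper cites \cite[Theorem 5.1.1]{book_smith_1995} for the ordering $x(t,\varphi,g)\leq x(t,\varphi,\bar g)$). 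Those parts are fine.

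The genuine gap is the middle step you yourself flag: global asymptotic stability of the delay-free envelope system $\dot y=\bar f(y)$, $\bar f(x)=\bar g(x,x)$. You correctly read Assumption~\ref{Assumption_application}.4 as saying that for every $x\in\R_+^n\setminus\{0\}$ some component satisfies $\bar f_{i(x)}(x)<0$, but converting this coordinate-wise negativity into the existence of arbitrarily large vectors $w>0$ with $\bar f(w)<0$ simultaneously in all components (which is what you need both to trigger Remark~\ref{remark_equ}/Theorem~\ref{thm_omegapath} and to cover arbitrary bounded sets of initial data) is left as a hope --- ``most likely through a monotone-sequence or fixed-point argument'' is not an argument, and this is precisely the nontrivial content of the step. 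The paper does not prove it either; it closes this step by citing \cite[Theorem III.2]{Vahid:14}, which states that under Assumption~\ref{Assumption_application} the sub-homogeneous monotone envelope system without delays is globally asymptotically stable, and only then applies Corollary~\ref{Corollary_subhomogeneous}. So either you invoke that result (as the paper does), or you must actually supply the missing construction; note that (as Remark~\ref{remark_equ} warns) even producing a single $w$ with $\bar f(w)<0$ would not by itself finish the job without the additional convergence and ``arbitrarily large $w$'' considerations, so this step carries real weight and cannot be waved through. The rest of your write-up would stand once that step is supplied or cited.
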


\begin{proof}
For any $x,y\in\R^n_+$ and each $i\in\Ical_n$, define
\begin{align*}
\bar{g}_i(&x,y)=\sup\bigl\{h_i(z)+d_i(z')\;|\;0\leq z \leq x,z_i=x_i,0\leq z' \leq y\bigr\}.
\end{align*}
It is straightforward to show that $\bar{g}(x,y)$ satisfies Assumption~\ref{Assumption_delaysystem}. Thus, the time-delay system
\begin{align}
\dot{x}\bigl(t\bigr) =\bar{g}\bigl(x(t),x(t-\tau(t))\bigr),
\label{eqn_sys_delayednew}
\end{align}
 is monotone. Under Assumption~\ref{Assumption_application}, the sub-homogeneous monotone system~\eqref{eqn_sys_delayednew} without delays $(\tau(t)=0)$ has a globally asymptotically stable equilibrium at the origin~\cite[Theorem III.2]{Vahid:14}. Therefore, according to Corollary~\ref{Corollary_subhomogeneous}, the origin for the time-delay system~\eqref{eqn_sys_delayednew} is also globally asymptotically stable for any time-delays satisfying Assumption~\ref{General Delays Assumption}.

As $g(x,y)\leq \bar{g}(x,y)$ for any $x,y\in\R^n_+$, it follows from~\cite[Theorem 5.1.1]{book_smith_1995} that for any initial condition $\varphi$,
\begin{align}
x(t,\varphi,g)\leq x(t,\varphi,\bar{g}),\quad t\in\R_+,
\label{order_nonmonotone}
\end{align}
where $x(t,\varphi,g)$ and $ x(t,\varphi,\bar{g})$ are solutions to~\eqref{Delaysystem-example} and~\eqref{eqn_sys_delayednew}, respectively, for a common initial condition $\varphi$. Since $x=0$ is a globally asymptotically stable equilibrium point for~\eqref{eqn_sys_delayednew}, $x(t,\varphi,\bar{g})\rightarrow 0$ as $t \rightarrow \infty$. Moreover, as~\eqref{Delaysystem-example} is a positive system, $x(t,\varphi,g)\geq 0$ for $t\in\R_+$. We can conclude from~\eqref{order_nonmonotone} and the above observations that for any nonnegative initial condition $\varphi$, $x(t,\varphi,g)$ converges to the origin. Hence, for the time-delay system~\eqref{Delaysystem-example}, the origin is globally asymptotically stable.
\end{proof}

%
%

\section{Conclusions}
\label{sec_conclusions}

In this paper, we have presented a number of results that extend fundamental stability properties of monotone linear systems to monotone nonlinear systems. Specifically, we have shown that for such nonlinear systems, equivalence holds between asymptotic stability of the origin, the existence of a max-separable Lyapunov function on a compact set, and D-stability. In addition, we have demonstrated that if the origin for a delay-free monotone system is asymptotically stable, then the corresponding system with bounded and unbounded time-varying delays also has an asymptotically stable equilibrium point at the origin. We have derived a set of necessary and sufficient conditions for establishing delay-independent stability of monotone nonlinear systems, which allow us to extend several earlier works in the literature. We have illustrated the main results with several examples.

%
%

\appendix

Before proving the main results of the paper, namely, Theorems~\ref{thm_omegapath} and~\ref{thm_delayindependent}, we first state a key lemma which shows that all components of a max-separable Lyapunov function are necessarily monotonically increasing.

\begin{Lemma}
\label{lem_Vipositivederivative}
\textit{
Consider the monotone system (\ref{eqn_sys}) and the max-separable function $V:\R_+^n\rightarrow\R_+$ as in (\ref{eqn_maxsepV}) with $V_i:\R_+\rightarrow\R_+$ differentiable for all $i\in\Ical_n$. Suppose that there exist functions $\nu_1,\nu_2$ of class \classK such that
\begin{align}
\nu_1(x_i) \leq V_i(x_i) \leq \nu_2(x_i), \label{eqn_lem_Viposder_Vibounds}
\end{align}
for all $x_i\in\R_+$ and all $i\in\Ical_n$. Suppose also that there exists a positive definite function $\mu$ such that
\begin{align}
D^+V(x) \leq -\mu(V(x)), \label{eqn_lem_Viposder_DV}
\end{align}
for all $x\in\R_+^n$. Then, the functions $V_i$ satisfy, for all $x_i >0$,}
\begin{align}
\pd{V_i}{x_i}(x_i) > 0.
\label{eqn_lem_Viposder_Viposder}
\end{align}
\vskip2mm%
\end{Lemma}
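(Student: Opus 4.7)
The plan is to argue by contradiction. Suppose there exist some $i\in\Ical_n$ and some $x_i^*>0$ for which $\pd{V_i}{x_i}(x_i^*)\leq 0$; I aim to produce a state in $\R_+^n$ at which the decay inequality (\ref{eqn_lem_Viposder_DV}) fails. The natural candidate is the axis point $x^\star\in\R_+^n$ whose $i$-th coordinate equals $x_i^*$ and whose remaining coordinates are zero.

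First, from the bounds (\ref{eqn_lem_Viposder_Vibounds}) and the fact that $\nu_1,\nu_2$ are of class \classK, I would deduce $V_j(0)=0$ for every $j\in\Ical_n$ together with $V_j(x_j)>0$ whenever $x_j>0$. At $x^\star$ this gives $V(x^\star)=V_i(x_i^*)\geq\nu_1(x_i^*)>0$, while $V_j(0)=0<V_i(x_i^*)$ for $j\neq i$, so the active index set (\ref{eqn_lem_DVmaxsep_Jcal}) reduces to $\Jcal(x^\star)=\{i\}$.

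The heart of the argument is to sharpen the assumption $\pd{V_i}{x_i}(x_i^*)\leq 0$ to a vanishing derivative at some interior point. If $\pd{V_i}{x_i}(x_i^*)<0$, a first-order expansion yields $V_i(x_i^*-\varepsilon)>V_i(x_i^*)$ for all sufficiently small $\varepsilon>0$. Combining this with $V_i(0)=0<V_i(x_i^*)$ and continuity of $V_i$ on the compact interval $[0,x_i^*]$ implies that $V_i$ attains its maximum over $[0,x_i^*]$ at some interior point $\hat{x}_i\in(0,x_i^*)$; since $V_i$ is differentiable there, Fermat's rule gives $\pd{V_i}{x_i}(\hat{x}_i)=0$. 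Replacing $x_i^*$ by $\hat{x}_i$ if necessary, I may therefore assume from the outset that $\pd{V_i}{x_i}(x_i^*)=0$.

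The conclusion follows by applying Proposition~\ref{lem_DVmaxsep} at $x^\star$: since $\Jcal(x^\star)=\{i\}$, Danskin's formula collapses to

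\begin{align*}
D^+V(x^\star)=\pd{V_i}{x_i}(x_i^*)\,f_i(x^\star)=0.
\end{align*}

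But hypothesis (\ref{eqn_lem_Viposder_DV}) demands $D^+V(x^\star)\leq-\mu(V(x^\star))<0$, because $\mu$ is positive definite and $V(x^\star)>0$. This contradiction establishes (\ref{eqn_lem_Viposder_Viposder}). The main obstacle I anticipate is the clean justification of the interior-maximum reduction; note that monotonicity of the system does not appear to be used anywhere in this argument, which relies only on the max-separable structure of $V$ and the Dini formula of Proposition~\ref{lem_DVmaxsep}.
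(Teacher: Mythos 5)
Your proof is correct, and its core device is the same as the paper's: evaluate $D^+V$ at axis points $e_i x_i$ with $x_i>0$, where the bounds (\ref{eqn_lem_Viposder_Vibounds}) force the active set to be the singleton $\{i\}$, so that Proposition~\ref{lem_DVmaxsep} gives $\pd{V_i}{x_i}(x_i)f_i(e_ix_i)\leq-\mu(V_i(x_i))<0$; correctly, neither argument uses monotonicity of the system. Where you genuinely differ is in how the case $\pd{V_i}{x_i}(x_i^*)<0$ is eliminated. The paper argues globally: the strict inequality shows $\pd{V_i}{x_i}$ never vanishes on $(0,\infty)$, hence has constant sign (for merely differentiable $V_i$ this step quietly invokes Darboux's intermediate-value property of derivatives), and a negative sign would drive $V_i$ below its class-\classK lower bound. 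You argue locally: a strictly negative derivative at $x_i^*$ together with $V_i(0)=0<V_i(x_i^*)$ yields an interior maximizer $\hat{x}_i\in(0,x_i^*)$ of $V_i$ on $[0,x_i^*]$, Fermat's rule gives $\pd{V_i}{x_i}(\hat{x}_i)=0$, and the axis-point evaluation at $e_i\hat{x}_i$ then contradicts the strict decay directly. Your finish avoids the sign-constancy step and is somewhat more self-contained under the mere differentiability hypothesis; both proofs share the (harmless) reliance on Proposition~\ref{lem_DVmaxsep}, which is stated for continuously differentiable $V_i$ but whose formula at an axis point can be verified directly from the definition (\ref{eqn_Diniderivative}).
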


\begin{proof}
For some $j\in\Ical_n$, consider the state $x = e_jx_j$, where $e_j$ is the $j^{\textup{th}}$ column of the identity matrix $I\in\R^{n\times n}$, and $x_j\in \R_+$. From (\ref{eqn_lem_Viposder_Vibounds}), $V_j(x_j)>0$ for any $x_j>0$ and $V_i(x_i)=0$ for $i \neq j$. Thus, the set $\Jcal$ in (\ref{eqn_lem_DVmaxsep_Jcal}) satisfies $\Jcal(e_jx_j) = \{j\}$ for all $x_j>0$. Evaluating (\ref{eqn_lem_Viposder_DV}) through Proposition~\ref{lem_DVmaxsep} leads to
\begin{align}
\!\!D^+V(e_jx_j) = \pd{V_j}{x_j}(x_j)f_j(e_jx_j) \leq -\mu(V(e_jx_j)) < 0\!
\label{eqn_lem_Viposder_proof_step1}
\end{align}
for $x_j>0$. The strict inequality in (\ref{eqn_lem_Viposder_proof_step1}) implies that ${\partial V_j}/{\partial x_j}$ is nonzero and that its sign is constant for all $x_j>0$. As a negative sign would yield $V_j(x_j)<0$ (and, hence violate (\ref{eqn_lem_Viposder_Vibounds})) and $j$ is chosen arbitrarily, (\ref{eqn_lem_Viposder_Viposder}) holds.
\end{proof}

\subsection*{A.~Proof of Theorem~\ref{thm_omegapath}}

The theorem will be proven by first showing the equivalence $1) \Leftrightarrow 2) \Leftrightarrow 3)$. Next, this equivalence will be exploited to subsequently show $3)\Rightarrow4)$ and $4)\Rightarrow1)$. Consequently, we have
\begin{align*}
1) \Leftrightarrow 2) \Leftrightarrow 3) \Rightarrow 4) \Rightarrow 1),
\end{align*}
which proves the desired result.

First, we note that the implication $2)\Rightarrow1)$ follows directly from Lyapunov stability theory (\textit{e.g.}, \cite{book_rouche_1977}) and we proceed to prove that $1)$ implies $2)$.

$1)\Rightarrow 2)$: By asymptotic stability of the origin as in Definition~\ref{def_stability}, the region of attraction of $x=0$ defined as
\begin{align*}
\textstyle
\Acal := \bigl\{ x_0\in\R^n_+ \mid \lim_{t\rightarrow\infty}x(t,x_0) = 0 \bigr\}
\end{align*}
is nonempty. In fact, we can find some $\delta>0$ such that all states satisfying $0\leq x< \delta\ones$ are in $\Acal$. Then, as the system~(\ref{eqn_sys}) is monotone, the reasoning from~\cite[Theorem~3.12]{ruffer_2010b} (see also \cite[Theorem~2.5]{ruffer_2010c} for a more explicit statement) can be followed to show the existence of a vector $v$  such that $0< v< \delta\ones$ and $f(v)< 0$.

Let $\omega(t)=x(t,v)$, $t\in \R_+$ be the solution to~\eqref{eqn_sys} starting from such a $v$. By the local Lipschitz continuity of $f$, $\omega$ is continuously differentiable. As $v\in {\mathcal A}$,  $w_i(t)\rightarrow 0$ when $t\rightarrow \infty$ for each $i \in \Ical_n$. Moreover, note that $v$ is an element of the set
\begin{align*}
\Omega = \{x\in\R^n_+\mid f(x) <0\}.
\end{align*}
According to~\cite[Proposition 3.2.1]{book_smith_1995}, $\Omega$ is forward invariant so $\omega(t) \in\Omega$ for all $t\in\R_+$. Thus, the components $\omega_i(t)$ are strictly decreasing in $t$, \textit{i.e.}, $\dot{\omega}_i(t)<0$ for $t\in \R_+$. This further implies that, for a given state component $x_i\in (0,v_i]$, there exists a unique $t\in\R_+$ such that $x_i=\omega_i(t)$. Let
\begin{align*}
T_i(x_i)=\bigl\{t\in \R_+ \;|\; x_i=\omega_i(t)\bigr\}.
\end{align*}
From the definition, it is clear that $T_i(x_i)=\omega_i^{-1}(x_i)$. Since $\dot{\omega}_i(t)<0$ for $t\in \R_+$, the inverse of $\omega_i$, \textit{i.e.}, the function $T_i$, is continuously differentiable and strictly decreasing for all $x_i\in (0,v_i]$. We define, as in \cite{rantzer_2013,dirr_2015}, the component functions
\begin{align}
V_i(x_i)=e^{-T_i(x_i)},\quad i\in\Ical_n.
\label{proof12_Vi}
\end{align}
Note that $V_i(0)=0$. Moreover, $V_i$ are continuously differentiable and strictly increasing for all $x_i\in (0,v_i]$ as a result of the properties of the function $T_i$. Therefore, the component functions $V_i$ in (\ref{proof12_Vi}) satisfy (\ref{eqn_thm_omegapath_Vbounds}) for some functions $\nu_1,\nu_2$ of class \classK. Moreover, from \cite[Theorem 3.2]{dirr_2015}, the upper-right Dini derivative of the max-separable Lyapunov function (\ref{eqn_maxsepV}) with components (\ref{proof12_Vi}) is given by
\begin{align*}
D^+V(x) \leq - V(x),
\end{align*}
for all $x\in\Xcal$ with $\Xcal = \{x\in\R^n_+\mid 0\leq x\leq v\}$. This shows that (\ref{eqn_thm_omegapath_DV}) holds, and hence the proof is complete.$\qed$

We now proceed to show equivalence between $2)$ and $3)$. We begin with the implication $2)\Rightarrow 3)$.

$2)\Rightarrow 3)$:  According to Lemma~\ref{lem_Vipositivederivative}, the functions $V_i$ are monotonically increasing. Thus, their inverses
\begin{align}
\rho_i(s) = V_i^{-1}(s)
\label{eqn_thm_omegapath_proof12_rhoi}
\end{align}
can be defined for all $s\in[0,V_i(v_i)]$. Define
\begin{align*}
\bar{s}:=\min_{i \in \Ical_n} V_i(v_i).
\end{align*}
Since $v>0$, it follows from~\eqref{eqn_thm_omegapath_Vbounds} that $\bar{s}>0$. Moreover, due to continuous differentiability of $V_i$ and Lemma~\ref{lem_Vipositivederivative},  $\rho_i$ is of class $\classK$ and satisfies (\ref{eqn_thm_omegapath_drhods}).

In the remainder of the proof, it will be shown that the function $\rho$ with components $\rho_i$ defined in (\ref{eqn_thm_omegapath_proof12_rhoi}) satisfies (\ref{eqn_thm_omegapath_rhoineq}) for all $s\in[0,\bar{s}]$. Thereto, consider a state ${x}^s$, parameterized by $s$ according to the definition ${x}^s := \rho(s)$.  Since $\bar{s} \leq V_i(v_i)$ for all $i \in \Ical_n$, it holds that ${x}^s\in{\Xcal}$ for all $s\in[0,\bar{s}]$. Evaluating (\ref{eqn_thm_omegapath_DV}) for such an $x^s$ yields
\begin{align}
\!D^+V\bigl({x}^s\bigr) = \max_{j\in\Jcal({x}^s)} \pd{V_j}{x_j}\bigl({x}^{s}_{j}\bigr)f_j\bigl({x}^s\bigr) \leq -\mu\bigl(V({x}^s)\bigr).
\label{eqn_thm_omegapath_proof12_DV_step1}
\end{align}
The definition ${x}_i^s = \rho_i(s)$, $i\in\Ical_n$, implies, through (\ref{eqn_thm_omegapath_proof12_rhoi}), that $V_i({x}^{s}_{i}) = s$ for all $s\in[0,\bar{s}]$. Consequently, $V({x}^s) = s$ and the set $\Jcal({x}^s)$ in (\ref{eqn_thm_omegapath_proof12_DV_step1}) satisfies $\Jcal({x}^s) = \Ical_n$, such that (\ref{eqn_thm_omegapath_proof12_DV_step1}) implies
\begin{align}
\pd{V_i}{x_i}\bigl(\rho_i(s)\bigr)f_i\bigl(\rho(s)\bigr) \leq -\mu\bigl(s\bigr)
\label{Proof23_step1}
\end{align}
for $s\in[0,\bar{s}]$ and $i\in\Ical_n$. Since $V_i$ is strictly increasing and $\mu$ is positive definite, $f_i(\rho_i(s))\leq0$. Define functions $r_i:[0,\bar{s}]\rightarrow\R$ as
\begin{align*}
r_i(s) \defl \sup\!\bigg\{ \pd{V_i}{x_i}(z_i) \;\bigg|\;  \rho_i(s) \leq z_i \leq \rho_i(\bar{s}) \bigg\}.
\end{align*}
By continuous differentiability of $V_i$ and the result of Lemma~\ref{lem_Vipositivederivative}, it follows that $r_i$ exists and satisfies $r_i(s)>0$ for all $s\in[0,\bar{s}]$. Moreover, it is easily seen that
\begin{align*}
r_i(s) \geq \pd{V_i}{x_i}(\rho_i(s)).
\end{align*}
This together with~\eqref{Proof23_step1} implies that
\begin{align*}
r_i(s)f_i(\rho(s)) \leq \pd{V_i}{x_i}(\rho_i(s))f_i(\rho(s)) \leq -\mu(s)
\end{align*}
for all $s\in[0,\bar{s}]$. Here, the inequality follows from the observation that $f_i(\rho_i(s))\leq0$. Then, strict positivity of $r_i$ implies that
\begin{align*}
f_i(\rho(s)) \leq -\frac{\mu(s)}{r_i(s)},
\end{align*}
for all $s\in[0,\bar{s}]$ and any $i\in\Ical_n$. Since $r_i$ is strictly positive and $\mu$ is positive definite, the function $\alpha_i(s) = \mu(s)/r_i(s)$ is positive definite and (\ref{eqn_thm_omegapath_rhoineq}) holds. $\qed$

We continue with the reverse implication $3) \Rightarrow 2)$.

$3)\Rightarrow 2):$ Define $v \defl \rho(\bar{s})$. Since $\rho_i$, $i \in \Ical_n$, are of class $\classK$ and $\bar{s}>0$, we have $v>0$. Let $V_i$ be such that
\begin{align}
V_i(x_i) = \rho_i^{-1}(x_i)
\label{eqn_thm_omegapath_proof21_Vi}
\end{align}
for $x_i\in [0, v_i]$. Note that the inverse of $\rho_i$ exists on this compact set as $\rho_i$ is of class $\classK$ and, hence, is strictly increasing. Because of the same reason, it is clear that $V_i$ as in (\ref{eqn_thm_omegapath_proof21_Vi}) satisfies (\ref{eqn_thm_omegapath_Vbounds}) for some functions $\nu_1,\nu_2$ of class \classK.

The remainder of the proof will show that the max-separable Lyapunov function (\ref{eqn_maxsepV}) with components (\ref{eqn_thm_omegapath_proof21_Vi}) satisfies (\ref{eqn_thm_omegapath_DV}). By Proposition~\ref{lem_DVmaxsep}, the upper-right Dini derivative of $V$ along solutions of (\ref{eqn_sys}) is given by
\begin{align}
D^+V(x) = \max_{j\in\Jcal(x)} \pd{V_j}{x_j}(x_j)f_j(x).
\label{eqn_thm_omegapath_proof21_DV_step1}
\end{align}
Define $\Xcal\defl\{x\in\R_+^n \;|\; 0\leq x \leq v\}$, choose any $x\in\Xcal$ and consider any $j\in\Jcal(x)$, where $\Jcal(x)$ is defined in (\ref{eqn_lem_DVmaxsep_Jcal}). Then, $V_j(x_j) = V(x)$, such that the use of equality (\ref{eqn_thm_omegapath_proof21_Vi}) leads to
\begin{align}
x_j = \rho_j(V_j(x_j)) = \rho_j(V(x)).
\label{eqn_thm_omegapath_proof21_xjeq}
\end{align}
Note that when $i$ be different from $j$, a similar argument establishes that $V_i(x_i)\leq V(x)$ and, thus, $x_i\leq \rho_i(V(x))$. Combining this with (\ref{eqn_thm_omegapath_proof21_xjeq}) gives
\begin{align}
x \leq \rho(V(x)).
\label{eqn_thm_omegapath_proof21_xineq}
\end{align}
Since $f$ satisfies Kamke condition~(\ref{eqn_thm_kamkecondition}), it follows from of \eqref{eqn_thm_omegapath_proof21_xjeq} and \eqref{eqn_thm_omegapath_proof21_xineq} that $f_j(x)\leq f_j(\rho(V(x)))$. Moreover, ${\partial V_j}/{\partial x_j} > 0$ for all $x_j\in (0, v_j]$ due to~(\ref{eqn_thm_omegapath_proof21_Vi}) and the fact that $\rho_i$ satisfies (\ref{eqn_thm_omegapath_drhods}). The above observations together with~\eqref{eqn_thm_omegapath_proof21_DV_step1} imply that
\begin{align}
D^+V(x) \leq \max_{j\in\Jcal(x)} \pd{V_j}{x_j}(x_j)f_j\big(\rho(V(x))\big).
\label{eqn_thm_omegapath_proof21_DV_step2}
\end{align}
Next, define functions $\bar{r}_i:[0,\bar{s}]\rightarrow\R_+$ as
\begin{align}
\bar{r}_i(s) \defl \inf\!\bigg\{ \pd{V_i}{x_i}(z_i) \;\bigg|\; \rho_i(s) \leq z_i \leq v_i  \bigg\}.
\label{eqn_thm_omegapath_proof21_ri}
\end{align}
Note that $\bar{r}_i$ is strictly positive for $s\in(0,\bar{s}]$. Moreover, for any $j\in\Jcal(x)$, the equality (\ref{eqn_thm_omegapath_proof21_xjeq}) is recalled, from which it follows that
\begin{align}
 \bar{r}_j(V(x))=\bar{r}_j(V_j(x_j))&=  \inf\!\bigg\{ \pd{V_j}{x_j}(z_j) \;\bigg|\; x_j \leq z_j \leq v_i  \bigg\}\nonumber\\
&\leq \pd{V_j}{x_j}(x_j).
\label{eqn_thm_omegapath_proof21_rjbound}
\end{align}
Then, returning to the Dini derivative of $V$ in (\ref{eqn_thm_omegapath_proof21_DV_step2}), the use of (\ref{eqn_thm_omegapath_rhoineq}), and subsequently, the application of (\ref{eqn_thm_omegapath_proof21_rjbound}), leads to
\begin{align}
D^+V(x) &\leq \max_{j\in\Jcal(x)} -\pd{V_j}{x_j}(x_j)\alpha_j(V(x)), \nonumber\\
&\leq \max_{j\in\Jcal(x)} -\bar{r}_j(V(x))\alpha_j(V(x)),
\label{eqn_thm_omegapath_proof21_DV_step3}
\end{align}
for all $x\in\Xcal$. Recall that the functions $\alpha_i$ are positive definite, whereas the functions $\bar{r}_i$ in (\ref{eqn_thm_omegapath_proof21_ri}) are strictly positive for $s\in(0,\bar{s}]$. As a result, the functions $\bar{r}_i(s)\alpha_i(s)$ are positive definite and there exists a positive definite function $\mu$ such that $\mu(s) \leq \bar{r}_i(s)\alpha_i(s)$ for all $s\in[0,\bar{s}]$ and all $i\in\Ical_n$. Applying this result to (\ref{eqn_thm_omegapath_proof21_DV_step3}) yields
\begin{align*}
D^+V(x) \leq -\mu(V(x)),
\end{align*}
for all $x\in\Xcal$, proving (\ref{eqn_thm_omegapath_DV}) and finalizing the proof.$\qed$

We now show that $3)$ implies $4)$ by exploiting the equivalence $1)\Leftrightarrow2)\Leftrightarrow3)$.

$3)\Rightarrow 4)$: First, we show that the system (\ref{eqn_thm_dstab_sysphi}) is monotone. Thereto, recall that monotonicity of (\ref{eqn_sys}) implies
\begin{eqnarray}
\hspace{-4mm} x' \leq x,\; x'_i = x_i \;\;&\Rightarrow&\;\; f_i(x')\leq f_i(x) \nonumber\\
\;\;&\Rightarrow&\;\; \psi_i\bigl(x'_i,f_i(x')\bigr) \leq\psi_i\bigl(x_i,f_i(x)\bigr),
\label{eqn_thm_dstab_proof_step1}
\end{eqnarray}
where the latter implication follows from (\ref{eqn_thm_dstab_phimonotone}). Then, (\ref{eqn_thm_dstab_proof_step1}) represents the Kamke condition for the vector field $\psi(x,f)$, such that monotonicity of (\ref{eqn_thm_dstab_sysphi}) follows from Proposition~\ref{thm_kamkecondition}.

Next, note that $\varphi\bigl(0,f(0)\bigr)=0$ implying that the origin is an equilibrium point of (\ref{eqn_thm_dstab_sysphi}). In order to prove asymptotic stability of the origin, we recall that $f$ satisfies, by assumption, (\ref{eqn_thm_omegapath_rhoineq}) for some function $\rho$. From this, we have
\begin{align}
\psi\bigl(\rho(s),f(\rho(s)\bigr) \leq \psi\bigl(\rho(s),-\alpha(s)\bigr),
\label{eqn_thm_dstab_proof_step2}
\end{align}
where the inequality is maintained due to the fact that $\psi_i$ is monotonically increasing in the second argument for all $i\in\Ical_n$. The functions $\alpha_i$ and $\rho$ are positive definite, hence $-\alpha(s)<0$ and $\rho(s)>0$ for all $s\in(0,\bar{s}]$. Then, from~(\ref{eqn_thm_dstab_phimonotone}), we have
\begin{align*}
\psi\bigl(\rho(s),-\alpha(s)\bigr)<\psi\bigl(\rho(s),0\bigr)=0,
\end{align*}
for $s\in(0,\bar{s}]$, where the equality follows from $\varphi(x,0)=0$ for any $x\in\R_+^n$. Hence, $\psi\bigl(\rho(s),-\alpha(s)\bigr)$ is negative definite and (\ref{eqn_thm_dstab_proof_step2}) is again of the form (\ref{eqn_thm_omegapath_rhoineq}). From the implication $3)\Rightarrow 2)\Rightarrow 1)$, we conclude the origin is asymptotically stable.$\qed$

Finally, we prove that $4)$ implies $1)$.

$4)\Rightarrow 1)$: Assume that the system  (\ref{eqn_thm_dstab_sysphi}) is asymptotically stable for any Lipschitz continuous function $\psi$ satisfying statement $4)$. Particularly, let $\psi(x,y)=y$. Then, the monotone system~\eqref{eqn_sys} is asymptotically stable.$\qed$

\subsection*{B.~Proof of Theorem~\ref{thm_delayindependent}}

$1)\Rightarrow 2)$: Assume that $x=0$ for the time-delay system~\eqref{Delaysystem} is asymptotically stable for all delays satisfying Assumption~\ref{General Delays Assumption}. Particularly, let $\tau(t)= 0$. Then, the non-delayed monotone system~\eqref{eqn_sys_nondelayed} has an asymptotically stable equilibrium point at the origin.

$2)\Rightarrow 1)$: For investigating asymptotic  stability of the time-delay monotone system~\eqref{Delaysystem}, we employ Lyapunov-Razumikhin approach which allows us to impose minimal restrictions on time-varying delays~\cite{driver_1962}. In particular, we make use of the max-separable Lyapunov function that guarantees asymptotic stability of the origin without time-delays as a Lyapunov-Razumikhin function.

Let $\rho$ be a function such that the delay-free system~\eqref{eqn_sys_nondelayed} satisfies (\ref{eqn_thm_omegapath_rhoineq}), \textit{i.e.},
\begin{align}
        g\bigl(\rho(s),\rho(s)\bigr) \leq -\alpha\bigl(s\bigr),
        \label{prooftheorem2-1}
\end{align}
holds for $s\in [0,\bar{s}]$. Define $v:=\rho(\bar{s})$ and let $\Xcal$ be the compact set (\ref{eqn_thm_omegapath_Xcal}). First, we show that for any $\varphi(t)\in \Xcal$, $t\in[-\tau_{\max},0]$, the solution $x(t,\varphi)$ satisfies $x(t,\varphi)\in \Xcal$ for all $t\in\R_+$.

Clearly, $x(0,\varphi)=\varphi(0)\in \Xcal$. In order to establish a contradiction, suppose that the statement $x(t,\varphi)\in \Xcal$, $t\in\R_+$, is not true. Then, there is $i\in\Ical_n$ and a time $\hat{t}\in\R_+$ such that $x(t,\varphi)\in \Xcal$ for all $t\in[0,\hat{t}]$, $x_i(\hat{t},\varphi)=v_i$, and
\begin{equation}
\label{prooftheorem2-2}
D^+x_i(\hat{t},\varphi)\geq 0.
\end{equation}
As $x(\hat{t},\varphi)\leq v$, it follows from Assumption~\ref{Assumption_delaysystem}.1 that
\begin{align}
g_i\bigl(x(\hat{t},\varphi),y\bigr)\leq g_i\bigl(v,y\bigr),
\label{prooftheorem2-3}
\end{align}
for $y\in\R_+^n$. As $\hat{t}-\tau(\hat{t})\in [-\tau_{\max},\hat{t}]$ and $\varphi(t)\in \Xcal$ for $t\in[-\tau_{\max},0]$, we have $x(\hat{t}-\tau(\hat{t}),\varphi) \in \Xcal$ irrespectively of whether $\hat{t}-\tau(\hat{t})$ is nonnegative or not. Thus, from Assumption~\ref{Assumption_delaysystem}.2,
\begin{align}
g_i\bigl(y,x(\hat{t}-\tau(\hat{t}),\varphi)\bigr)\leq g_i\bigl(y,v\bigr),
\label{prooftheorem2-4}
\end{align}
for any $y\in\R_+^n$. Using~\eqref{prooftheorem2-3} and~\eqref{prooftheorem2-4}, the Dini-derivative of $x_i(t,\varphi)$ along the trajectories of~\eqref{Delaysystem} at $t=\hat{t}$ is given by
\begin{align*}
D^+x_i(\hat{t},\varphi)&=g_i\bigl(x(\hat{t},\varphi),x(\hat{t}-\tau(\hat{t}),\varphi)\bigr)\\
&\leq g_i(v,v)= g_i\bigl(\rho(\bar{s}),\rho(\bar{s})\bigr)\\
&\leq -\alpha_i(\bar{s})<0,
\end{align*}
which contradicts~\eqref{prooftheorem2-2}. Therefore, $x(t,\varphi)\in \Xcal$ for $t\in\R_+$.

We now prove the asymptotic stability of the origin. According to the proof of Theorem~\ref{thm_omegapath}, if the delay-free system~(\ref{eqn_sys_nondelayed}) satisfies~(\ref{prooftheorem2-1}), then it admits a max-separable Lyapunov function (\ref{eqn_maxsepV}) with components $V_i(x_i) = \rho_i^{-1}(x_i)$ defined on $\Xcal$, see~(\ref{eqn_thm_omegapath_proof21_Vi}), such that
\begin{align}
D^+V(x)=\max_{j\in\Jcal({x})} \pd{V_j}{x_j}\bigl({x}_{j}\bigr)g_j\bigl(x,{x}\bigr)  \leq -\mu(V(x)),
\label{prooftheorem2-5}
\end{align}
holds for all $x\in \Xcal$. For any $\varphi(t)\in {\Xcal}$, from~\eqref{eqn_thm_omegapath_proof21_Vi}, we have
\begin{align}
x_j\bigl(t,\varphi\bigr) = \rho_j\bigl( V(x(t,\varphi)) \bigr), \quad j\in\Jcal(x(t,\varphi)),
\label{eqn_thm_delayindependent_proof_step1}
\end{align}
with $\Jcal$ as in (\ref{eqn_lem_DVmaxsep_Jcal}). Combining (\ref{eqn_thm_delayindependent_proof_step1}) with the observation that
\begin{align*}
x_i\bigl(t,\varphi\bigr) \leq \rho_i\bigl( V(x(t,\varphi)) \bigr),
\end{align*}
for any $i\in\Ical_n$, implying that
\begin{align}
x(t,\varphi) \leq \rho\bigl( V(x(t,\varphi)) \bigr) \defr \bar{x}(t)
\label{eqn_thm_delayindependent_proof_xineq}
\end{align}
for $\varphi(t)\in {\Xcal}$. At this point we recall that $x(t,\varphi)\in\Xcal$ for all $t\in\R_+$. Thus, $V(x(t,\varphi))\leq V(v)$. This in turn implies that $\rho(V(x(t,\varphi)) \bigr)\leq v$ and, hence, $\bar{x}(t)\in{\Xcal}$ for any $t\in \R_+$. Next, it follows from Assumption~\ref{Assumption_delaysystem}.1 that
\begin{align}
g_j\bigl(x(t,\varphi),y\bigr)\leq g_j\bigl(\bar{x}(t),y\bigr),\quad  j\in\Jcal(x(t,\varphi)),
\label{eqn_thm_delayindependent_proof_step2}
\end{align}
for any $\varphi(t)\in {\Xcal}$ and any $y\in\R_+^n$. This condition will be exploited later in the proof.

In the remainder of the proof, the max-separable Lyapunov function $V$ of the delay-free system (\ref{eqn_sys_nondelayed}) will be used as a candidate Lyapunov-Razumikhin function for the time-delay system (\ref{Delaysystem}). To establish a Razumikhin-type argument~\cite{driver_1962}, it is assumed that
\begin{align}
V\bigl(x(s,\varphi)\bigr) < q\bigl(V(x(t,\varphi))\bigr),
\label{eqn_thm_delayindependent_proof_razumikhincondition}
\end{align}
for all $s\in[t-\tau(t),t]$, where $q:\R_+\rightarrow \R_+$ is a continuous non-decreasing function satisfying $q(r)>r$ for all $r>0$. We will specify $q$ later. By the definition of $V$, it follows that the assumption (\ref{eqn_thm_delayindependent_proof_razumikhincondition}) implies that
\begin{align*}
V_i(x_i(t-\tau(t),\varphi))\leq q(V(x(t,\varphi))),
\end{align*}
for all $i\in\Ical_n$. As a result,
\begin{align}
x\bigl(t-\tau(t),\varphi\bigr) \leq \rho\bigl(q(V(x(t,\varphi))) \bigr) =: \tilde{x}(t),
\label{eqn_thm_delayindependent_proof_xdelayineq}
\end{align}
such that the application of Assumption~\ref{Assumption_delaysystem}.2 yields
\begin{align}
g_i\bigl(y,x(t-\tau,\varphi)\bigr)\leq g_i\bigl(y,\tilde{x}(t)\bigr),
\label{eqn_thm_delayindependent_proof_step3}
\end{align}
for all $i\in\Ical_n$ and any $y\in\R_n^+$. Returning to the candidate Lyapunov-Razumikin function $V$, its upper-right Dini derivative along trajectories $x(\cdot,\varphi)$ reads
\begin{align}
D^+V\bigl(x(t,\varphi)\bigr)&= \max_{j\in\Jcal(x(t,\varphi))} \pd{V_j}{x_j}\bigl(x_j(t,\varphi)\bigr)g_j\bigl(x(t,\varphi),x(t-\tau(t),\varphi)\bigr)\nonumber\\
&\leq  \max_{j\in\Jcal(x(t,\varphi))} \pd{V_j}{x_j}\bigl(x_j(t,\varphi)\bigr)g_j\bigl(\bar{x}(t),x(t-\tau(t),\varphi)\bigr)\nonumber\\
&= \max_{j\in\Jcal(x(t,\varphi))} \pd{V_j}{x_j}\bigl(\bar{x}_j(t)\bigr)g_j\bigl(\bar{x}(t),x(t-\tau(t),\varphi)\bigr),
\label{eqn_thm_delayindependent_proof_DV_step1}
\end{align}
where (\ref{eqn_thm_delayindependent_proof_step2}) was used to obtain the inequality (exploiting Proposition~\ref{lem_Vipositivederivative} as before) and~\eqref{eqn_thm_delayindependent_proof_step1} to get the second equality. Next, recall that the assumption (\ref{eqn_thm_delayindependent_proof_razumikhincondition}) implies (\ref{eqn_thm_delayindependent_proof_step3}), such that (\ref{eqn_thm_delayindependent_proof_DV_step1}) is bounded as
\begin{align}
\hspace{-3mm}D^+V\bigl(x(t,\varphi)\bigr) &\leq \max_{j\in\Jcal(x(t,\varphi))} \pd{V_j}{x_j}\bigl(\bar{x}_j(t)\bigr)g_j\bigl(\bar{x}(t),\tilde{x}(t)\bigr)\nonumber\\
&\leq \max_{j\in\Jcal(\bar{x}(t))} \pd{V_j}{x_j}\bigl(\bar{x}_j(t)\bigr)g_j\bigl(\bar{x}(t),\tilde{x}(t)\bigr).
\label{eqn_thm_delayindependent_proof_DV_step2}
\end{align}
Here, the second inequality follows from the observation that $\Jcal(\bar{x}(t)) = \Ical_n$ for any $t$, such that $\Jcal(x(t,\varphi))\subseteq\Jcal(\bar{x}(t))$.

We recall that the value of $\tilde{x}(t)$ in (\ref{eqn_thm_delayindependent_proof_DV_step2}) is dependent on the choice of the function $q$, see (\ref{eqn_thm_delayindependent_proof_xdelayineq}). At this point, we assume that $q$ can be chosen such that
\begin{align}
0\leq g_i\bigl(\bar{x}(t),\tilde{x}(t)\bigr) - g_i\bigl(\bar{x}(t),\bar{x}(t)\bigr) \leq \frac{\mu(V(\bar{x}(t)))}{2kD},
\label{eqn_thm_delayindependent_proof_q_assumption}
\end{align}
holds for all $\bar{x}(t)\in\Xcal$ and for any $i\in\Ical_n$. Here, $k\geq1$ will be chosen later and $D>0$ is such that
\begin{align}
\pd{V_i}{x_i}(x_i) \leq D, \quad \forall x\in\Xcal,\; \forall i\in\Ical_n.
\label{eqn_thm_delayindependent_proof_dVi_bound}
\end{align}
Note that such $D$ exists by continuous differentiability of $V$ and the fact that $\Xcal$ is a compact set. Also, we stress that the first inequality in (\ref{eqn_thm_delayindependent_proof_q_assumption}) follows from the observation that $\bar{x}(t)\leq\tilde{x}(t)$ (compare (\ref{eqn_thm_delayindependent_proof_xineq}) with (\ref{eqn_thm_delayindependent_proof_xdelayineq}) and recall that $q(r) > r$ for all $r>0$ and the functions $\rho_i$ are of class \classK).

Now, under the assumption (\ref{eqn_thm_delayindependent_proof_q_assumption}), it follows from (\ref{eqn_thm_delayindependent_proof_DV_step2}) that
\begin{align}
D^+V\bigl(x(t,\varphi)\bigr) &\leq \max_{j\in\Jcal(\bar{x}(t))}\biggl\{ \pd{V_j}{x_j}\bigl(\bar{x}_j(t)\bigr)g_j\bigl(\bar{x}(t),\bar{x}(t)\bigr)+ \pd{V_j}{x_j}\bigl(\bar{x}_j(t)\bigr)\frac{{\mu}\bigl(V(\bar{x}(t))\bigr)}{2kD }\biggr\}\nonumber\\
&\leq -\left(1-\frac{1}{2k}\right)\mu\bigl( V(\bar{x}(t)) \bigr).
\label{eqn_thm_delayindependent_proof_DV_step3}
\end{align}
Here, (\ref{prooftheorem2-5}) as well as the bound (\ref{eqn_thm_delayindependent_proof_dVi_bound}) are used and it is recalled that $V(\bar{x}(t)) = V(x(t,\varphi))$ by the choice of $\bar{x}(t)$ as (\ref{eqn_thm_delayindependent_proof_xineq}). As (\ref{eqn_thm_delayindependent_proof_DV_step3}) holds for any trajectory that satisfies (\ref{eqn_thm_delayindependent_proof_razumikhincondition}) and $k\geq1$, asymptotic stability of the origin follows from the Razumikhin stability theorem, see \cite[Theorem~7]{driver_1962}, provided that the assumption (\ref{eqn_thm_delayindependent_proof_q_assumption}) holds.

In the final part of the proof, we will construct a function $q$ that satisfies the assumption (\ref{eqn_thm_delayindependent_proof_q_assumption}). To this end, define the compact set
\begin{align*}
\tilde{\Xcal} \defl \{ x\in\R^n_+ \mid 0\leq x \leq 2v\}.
\end{align*}
Clearly, $\Xcal\subset\tilde{\Xcal}$. Since $g$ in (\ref{Delaysystem}) is locally Lipschitz, there is a constant $L_{\tilde{\Xcal}}>0$ such that
\begin{align}
\|g(x,y') - g(x,y)\|_{\infty} \leq L_{\tilde{\Xcal}} \| y' - y \|_{\infty}
\label{eqn_thm_delayindependent_proof_Lipschitz}
\end{align}
for all $x,y,y'\in\tilde{\Xcal}$ and with $\|x\|_{\infty} = \max_i|x_i|$. Since $\bar{x}(t)\in {\Xcal}$, we have $\bar{x}(t)\in \tilde{\Xcal}$. If $\tilde{x}(t)\in \tilde{\Xcal}$, it follows from (\ref{eqn_thm_delayindependent_proof_Lipschitz}) that
\begin{align}
g_j\bigl(\bar{x}(t),\tilde{x}(t)\bigr) - g_j\bigl(\bar{x}(t),\bar{x}(t)\bigr)&\leq L_{\tilde{\Xcal}} \max_{i\in\Ical_n}\;\{ \tilde{x}_i (t)- \bar{x}_i(t)\},
\label{eqn_thm_delayindependent_proof_q_step1}\\
&= L_{\tilde{\Xcal}} \max_{i\in\Ical_n} \;\bigl\{\rho_i\bigl(q(V(\bar{x}(t)))\bigr) - \rho_i\bigl(V(\bar{x}(t))\bigr)\bigr\},
\label{eqn_thm_delayindependent_proof_q_step2}
\end{align}
where the property $\bar{x}(t)\leq\tilde{x}(t)$ and the first inequality in (\ref{eqn_thm_delayindependent_proof_q_assumption}) are used to obtain inequality (\ref{eqn_thm_delayindependent_proof_q_step1}). Equality~\eqref{eqn_thm_delayindependent_proof_q_step2} follows from the definitions (\ref{eqn_thm_delayindependent_proof_xineq}) and (\ref{eqn_thm_delayindependent_proof_xdelayineq}). The desired condition (\ref{eqn_thm_delayindependent_proof_q_assumption}) holds if
\begin{align}
\rho_i\bigl(q(V(\bar{x}(t)))\bigr) - \rho_i\bigl(V(\bar{x}(t)))\bigr)
&\leq \frac{\tilde{\mu}(V(\bar{x}(t)))}{2kDL_{\tilde{\Xcal}}}\nonumber\\
&\leq \frac{\mu(V(\bar{x}(t)))}{2kDL_{\tilde{\Xcal}}}
\label{eqn_thm_delayindependent_proof_q_step3}
\end{align}
for all $i\in\Ical_n$. Here, $\tilde{\mu}:[0,s]\rightarrow\R_+$ is a function of class \classK that lower bounds the positive definite function $\mu$. Such a function $\tilde{\mu}$ exists by the fact that $\mu$ is positive definite on the compact set $[0,s]$~\cite{kellett_2014}. At this point, we note that, even though $\bar{x}(t)\in\tilde{\Xcal}$, this does not necessarily hold for $\tilde{x}(t)$. However, the condition (\ref{eqn_thm_delayindependent_proof_q_step3}) implies that
\begin{align*}
\tilde{x}(t) \leq \bar{x}(t) + \frac{\mu(V(\bar{x}(t)))}{2kDL_{\tilde{\Xcal}}} \ones_n \leq v+\frac{\mu(V(\bar{x}(t)))}{2kDL_{\tilde{\Xcal}}}\ones_n ,
\end{align*}
such that choosing $k$ sufficiently large guarantees $\tilde{x}(t)\in\tilde{\Xcal}$. For this choice, the Lipschitz condition (\ref{eqn_thm_delayindependent_proof_Lipschitz}) indeed holds.

After fixing $k\geq1$ as above and denoting $s = V(\bar{x})$, the function $q:[0,\bar{s}]\rightarrow\R_+$ defined as
\begin{align*}
q(s) \defl \min_{i\in\Ical_n} \rho_i^{-1}\!\left( \rho_i(s) + \frac{\tilde{\mu}(s)}{2kDL_{\tilde{\Xcal}}} \right).
\end{align*}
This function satisfies (\ref{eqn_thm_delayindependent_proof_q_step3}) and, hence, (\ref{eqn_thm_delayindependent_proof_q_assumption}). In addition, since the functions $\rho_i$ and $\tilde{\mu}$ are of class $\classK$, $q$ is nondecreasing. Also, it can be observed that $q(s) > s$ for all $s>0$ as required. Therefore, all conditions on $q$ are satisfied. This finalizes the proof of the implication $2)\Rightarrow1)$.

%
%

\bibliographystyle{unsrt}
\bibliography{all}

\end{document}